\documentclass[11pt]{article}

\usepackage[margin=1in]{geometry}
\usepackage[T1]{fontenc}
\usepackage[utf8]{inputenc}
\usepackage{graphicx}
\usepackage{amsmath,amssymb,amsfonts}
\usepackage{amsthm}
\usepackage{mathrsfs}
\usepackage{booktabs}
\usepackage{array}
\usepackage{enumitem}
\usepackage{xcolor}
\usepackage[numbers,sort&compress]{natbib}
\usepackage[colorlinks=true,citecolor=blue,linkcolor=blue,urlcolor=blue,pdfpagemode=UseNone]{hyperref}

\graphicspath{{figs/}}

\newtheorem{theorem}{Theorem}
\newtheorem{lemma}[theorem]{Lemma}
\newtheorem{corollary}[theorem]{Corollary}
\newtheorem{proposition}[theorem]{Proposition}
\newtheorem{conjecture}[theorem]{Conjecture}
\theoremstyle{definition}
\newtheorem{definition}[theorem]{Definition}
\theoremstyle{remark}
\newtheorem{remark}[theorem]{Remark}

\newcommand{\F}{\mathbb{F}}
\newcommand{\Z}{\mathbb{Z}}

\newcommand{\RG}{R[G]}
\newcommand{\FG}{\F[G]}

\DeclareMathOperator{\tr}{tr}

\DeclareMathOperator{\poly}{poly}

\title{Quantum Inversion of Units in Group Rings: Block Dimension, Not Commutativity, Governs Hardness}

\author{
Bhanwar Gupta\\
IBM India, IBM CIO\\
\texttt{bgupta55@gmail.com}
}
\date{}

\begin{document}

\maketitle

\begin{abstract}
Several public-key schemes base their security on the belief that inverting a unit of a group ring is hard. A recent result showed that this belief is false on a quantum computer when the group is abelian. To restore security, designers moved to non-abelian groups, especially dihedral groups, in the belief that the hardness of the dihedral hidden subgroup problem (HSP) would protect the scheme. This paper shows that the two problems are not the same. Inverting a unit does not need an HSP solver; it needs only a change of basis that splits the group ring into small matrix blocks. We prove that unit inversion runs in polynomial time, on both classical and quantum machines, when three conditions hold: an efficient generalized Fourier transform exists, the group ring is semisimple, and the largest matrix block has polynomial size. Under these conditions, schemes built on dihedral and other small non-abelian groups do not gain the quantum resistance their designers expected. We give an explicit reversible quantum circuit for the block-inversion step and validate it in a register-level simulator. We also give an exact and easy-to-check test for the one structural boundary where the method stops, and we propose a candidate construction in the surviving regime whose security we reduce to a new, clearly stated assumption. Every constructive claim is reproduced by an environment-pinned software artifact.
\end{abstract}

\noindent\textbf{Keywords:} post-quantum cryptography, quantum cryptanalysis, group rings, Wedderburn decomposition, quantum Fourier transform, cryptanalysis

\bigskip
\noindent\footnotesize\textit{The views and opinions expressed in this article are those of the author and do not necessarily reflect the official policy or position of IBM.}
\normalsize
\bigskip

\section{Introduction}\label{sec:intro}

A number of public-key schemes draw their security from the units of a group ring. A unit is an
invertible element; in these schemes it acts as a secret trapdoor, with public operations using the
unit and the secret operation using its inverse. The schemes are believed secure because the
unit-group structure of a non-commutative group ring is poorly understood, so recovering the inverse
from public data is assumed to be hard~\cite{hurley2011,mittalkumar2021,idbased2022}.

The abelian case fell recently. Dooms and Emerencia~\cite{dooms2025} gave a quantum algorithm that
inverts any unit of a group ring with a finite commutative coefficient ring, below the cost of
classical linear algebra, using the abelian quantum Fourier transform (QFT). In response, designers
moved to non-abelian platforms---dihedral groups above all---on a specific and load-bearing belief:
that the hardness of the dihedral hidden subgroup problem (HSP) would transfer to the
cryptosystem~\cite{regev2004,kuperberg2005}.

This paper shows that the belief conflates two different problems. Inverting a unit of a group ring
is \emph{not} an instance of the hidden subgroup problem. The two share a single subroutine---the
generalized Fourier transform over $G$---and nothing else. The HSP must, after transforming, recover
a hidden subgroup from coset states, and this is the step believed hard. Unit inversion never forms a
coset state and never recovers a subgroup: by the Artin--Wedderburn isomorphism
$\FG\cong\bigoplus_t M_{d_t}(\F)$, the transform sends a unit to a tuple of invertible matrices whose
inverse is read off block by block. The dihedral group is the textbook witness that an efficient
Fourier transform does not yield an efficient HSP algorithm~\cite{kuperberg2005}; we observe that
unit inversion needs only the efficient half of that witness. The hardness the platforms were chosen
for is therefore orthogonal to the task they pose.

Making this precise reclassifies the design space. The quantity that governs cryptanalytic
tractability is not commutativity but a representation-theoretic parameter: the largest irreducible
dimension $d_{\max}$, together with the efficiency of the transform. We prove that unit inversion is
polynomial in the input size, both classically and quantumly, whenever the group ring is semisimple,
admits an efficient generalized transform, and has $d_{\max}=\poly(\log|G|)$. We support the converse
direction with output-size, query-complexity, and \textsc{\#P}-hardness evidence, leaving a single
general lower bound as an explicit conjecture. Under these conditions the dihedral and other
bounded-dimension platforms adopted for quantum resistance do not attain it. We give an exact,
efficiently checkable trace-form criterion for the one structural boundary where the method stops, an
explicit reversible circuit for the block-inversion step, a code-validated implementation of every
constructive claim, and fault-tolerant resource estimates.

The contribution is therefore not the use of a known decomposition but a principle about where
group-ring hardness can and cannot live. Non-commutativity and cryptographic hardness are independent
notions; the migration the field performed for safety moved along an axis that does not control the
attack. Section~\ref{sec:hspsep} isolates the separation of the two hardness notions, and
Section~\ref{sec:notautomatic} explains why the consequence does not follow from Artin--Wedderburn
alone. To keep the development self-contained, the needed background is recalled one concept at a
time in Section~\ref{sec:prelim}, and Figure~\ref{fig:hsp} states the conceptual separation that the
rest of the paper makes precise.

\subsection{Contributions}\label{sec:contrib}

\paragraph{Conceptual}
\begin{itemize}[leftmargin=1.4em,itemsep=2pt]
\item \textbf{The HSP--inversion separation.} We show that group-ring unit inversion factors entirely
through the generalized Fourier transform and never invokes the hidden subgroup problem. This
overturns the implicit premise behind the field's move to non-abelian platforms
(Section~\ref{sec:hspsep}).
\item \textbf{Block dimension, not commutativity, as the governing parameter.} We identify the
largest irreducible dimension $d_{\max}$, together with transform efficiency, as the quantity that
controls cryptanalytic tractability, and argue that non-commutativity and hardness are independent
(Sections~\ref{sec:dichotomy} and~\ref{sec:hspsep}).
\end{itemize}

\paragraph{Theoretical}
\begin{itemize}[leftmargin=1.4em,itemsep=2pt]
\item \textbf{A complexity dichotomy.} We prove the easy direction unconditionally
(Theorem~\ref{thm:easy}) and support the hard direction with an unconditional output-size separation
(Proposition~\ref{prop:outputsize}), an oracle-model query bound (Theorem~\ref{thm:blockoracle}), and
a conditional \textsc{\#P}-hardness result via immanants (Theorem~\ref{thm:sharpP}).
\item \textbf{An exact boundary criterion.} The regular trace form satisfies $\det T=\pm|G|^{|G|}$,
giving an efficiently checkable witness of the semisimplicity boundary where the method stops
(Theorem~\ref{thm:boundary}).
\item \textbf{A unifying generalization.} The same principle applies to any finite-dimensional
semisimple algebra with an efficient transform (Theorem~\ref{thm:algebra}), which explains, in
retrospect, prior one-off breaks of algebra-based schemes.
\end{itemize}

\paragraph{Cryptanalytic}
\begin{itemize}[leftmargin=1.4em,itemsep=2pt]
\item \textbf{Falsification of the dihedral security rationale.} Unit inversion in $\F[D_n]$ is
efficient independently of dihedral-HSP hardness (Corollary~\ref{cor:dihedral}); dihedral platforms
do not attain quantum security under the stated conditions.
\item \textbf{Random-oracle armoring does not help.} A scheme broken at the trapdoor remains broken
under a random-oracle (QROM) transform, with zero oracle queries for key recovery
(Proposition~\ref{prop:qrom}).
\item \textbf{End-to-end recovery and a deployment checklist.} We carry the result to plaintext
recovery against a representative scheme (Section~\ref{sec:e2e}) and map it onto published proposals
and parameters (Section~\ref{sec:impact}).
\end{itemize}

\paragraph{Engineering}
\begin{itemize}[leftmargin=1.4em,itemsep=2pt]
\item \textbf{A reusable reversible primitive.} The Coherent Block Inversion circuit applies a
finite-field matrix inverse coherently with clean ancillae and full $T$-count accounting
(Lemma~\ref{lem:cbi}); it is of independent interest as a quantum sub-routine and is register-level
validated.
\item \textbf{A reproducible artifact and resource estimates.} An environment-pinned artifact
regenerates every theorem check, figure, and table, accompanied by logical and fault-tolerant
resource projections (Sections~\ref{sec:experiments} and~\ref{sec:resources}).
\end{itemize}

\paragraph{What we deliberately do not claim as novel} The blockwise inversion identity itself
(Theorem~\ref{thm:gui}(i)--(ii)) is classical algebra; our contribution is to identify what it
implies for quantum cryptanalysis, not to rediscover it.

\subsection{Scope of the claims}\label{sec:scope}

Because results of this kind are easy to overstate, we state the scope plainly. What we prove
without conditions is one direction: an efficient inversion algorithm and the exact conditions under
which it runs. The conclusion that a given family of schemes loses its expected quantum-security
margin is therefore conditional. It depends on three platform conditions: an efficient non-abelian
transform, a splitting field (or its extension-field surrogate), and a polynomially bounded largest
block dimension $d_{\max}$. It also depends on the scheme's security actually resting on unit
inversion rather than on a separate assumption. The complementary statement, that platforms outside
this regime are hard, is a \emph{conjecture} (Conjecture~\ref{conj:hard}) supported by partial
evidence. The modular construction of Section~\ref{sec:surviving} is a \emph{candidate} whose
security we reduce to, but do not establish for, a new assumption. We keep these distinctions visible
at every step, and Section~\ref{sec:proven} collects the full accounting in one place. We make no
claim about deployment or about the hardness of the new assumption beyond what is proved.

The paper is organized as follows. Section~\ref{sec:prelim} introduces the needed background one
concept at a time. Section~\ref{sec:hspsep} separates unit-inversion hardness from hidden-subgroup
hardness, and Section~\ref{sec:notautomatic} explains why the result is not automatic from
Artin--Wedderburn. Section~\ref{sec:security} fixes the security model and
Section~\ref{sec:security-impl} draws the practical security implications. The technical core
follows. Sections~\ref{sec:main}--\ref{sec:dichotomy} build the algorithm, the
boundary test, and the complexity picture. Section~\ref{sec:coherent} gives the explicit circuit.
Sections~\ref{sec:e2e} and~\ref{sec:impact} report an end-to-end attack and its practical impact.
Sections~\ref{sec:surviving}--\ref{sec:resources} give the candidate construction, the experiments,
and the resource estimates. Sections~\ref{sec:related}--\ref{sec:conclusion} cover related work,
limitations, and conclusions.

\begin{figure}[t]\centering
\includegraphics[width=\linewidth]{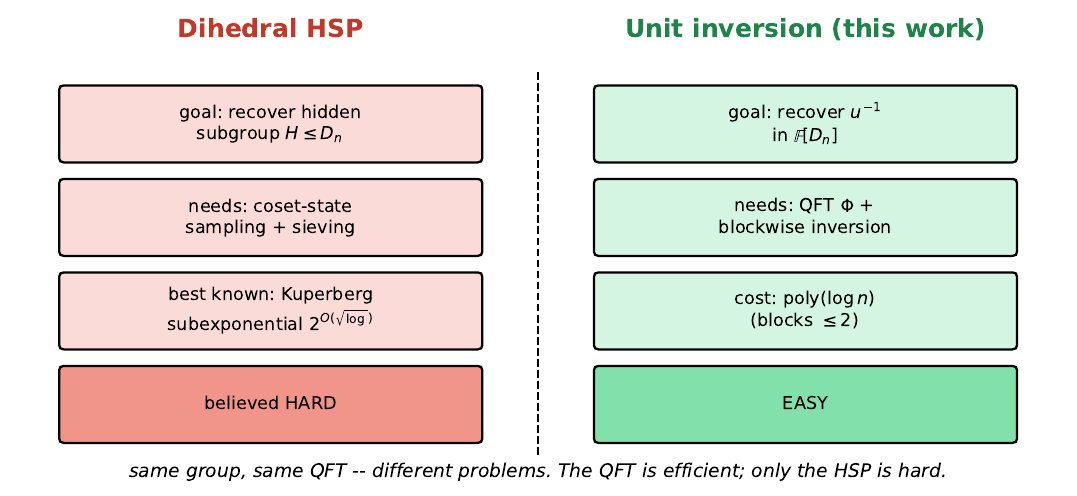}
\caption{The two problems on the dihedral group $D_n$. Both use the same efficient dihedral Fourier
transform. The hidden subgroup problem also requires recovering a hidden subgroup from coset states,
which is believed hard. Unit inversion does not, and is polynomial under the conditions of
Theorem~\ref{thm:gui}. Previous work often treats the two problems as one; this paper separates
them.}
\label{fig:hsp}
\end{figure}

\section{Background}\label{sec:prelim}

This section introduces the needed ideas one at a time, in the order the paper uses them. A reader
familiar with group rings and the Wedderburn decomposition may skip to Section~\ref{sec:security}.

\paragraph{Group rings and units} Let $R$ be a ring and let $G=\{g_0=e,g_1,\dots,g_{n-1}\}$ be a
finite group. The group ring $\RG$ consists of formal sums $\sum_g a_g g$ with coefficients
$a_g\in R$. Addition is componentwise and multiplication follows the group operation (convolution).
A \emph{unit} is an invertible element, and the units form the group $\RG^\times$. The schemes we
study~\cite{hurley2011,mittalkumar2021,idbased2022} assume that, given a unit $u$ presented as a
product, recovering $u^{-1}$ is hard.

\paragraph{Splitting a group ring into blocks} Let $\F$ be a field. By Maschke's theorem, the group
algebra $\FG$ is \emph{semisimple} exactly when the characteristic of $\F$ does not divide $|G|$.
When $\F$ is also a splitting field for $G$, the Artin--Wedderburn theorem gives an algebra
isomorphism
\begin{equation}\label{eq:aw}
\Phi:\FG\;\xrightarrow{\ \sim\ }\;\bigoplus_{t=1}^{h}M_{d_t}(\F),\qquad
\Phi(a)=\Big(\textstyle\sum_g a_g\,\rho_t(g)\Big)_{t=1}^{h},
\end{equation}
where $\rho_1,\dots,\rho_h$ are the inequivalent irreducible representations of dimensions
$d_1,\dots,d_h$, and $\sum_t d_t^2=|G|$. In words, the group ring becomes a list of small matrix
algebras. Any way of computing $\Phi$ is a \emph{generalized Fourier transform}. For abelian $G$,
all blocks have size $d_t=1$ and $\Phi$ is the ordinary character transform of~\cite{dooms2025}. A
quantum circuit for $\Phi$ is a QFT over $G$, and efficient circuits (polynomial in $\log|G|$) are
known for abelian groups, supersolvable groups~\cite{beals1997}, broad classes via the generic
construction~\cite{moore2006}, and explicitly for dihedral groups~\cite{dihedralgauge2024}.

\paragraph{The hidden subgroup problem, and what we do not use} The HSP asks one to recover an
unknown subgroup $H\le G$ from a function that is constant on the cosets of $H$. The abelian case is
easy. The non-abelian case is hard in general; the dihedral case admits only Kuperberg's
subexponential algorithm~\cite{kuperberg2005}, a fact linked to the shortest-vector
problem~\cite{regev2004}. Our algorithm solves no HSP. It uses the Fourier transform only as the
change of basis in Eq.~\eqref{eq:aw}. This difference is the separation we exploit, and it is the
reason a hard HSP does not protect these schemes.

\section{Separating hidden-subgroup hardness from unit-inversion hardness}
\label{sec:hspsep}

The non-abelian turn in group-ring cryptography rests on a single inference: that because the
dihedral hidden subgroup problem (HSP) resists efficient quantum algorithms, a cryptosystem built on
a dihedral group ring inherits that resistance. This section isolates that inference and shows it
does not hold.

\paragraph{What was implicitly assumed} The security of a non-abelian group-ring scheme was argued by
analogy. The abelian schemes fell because the abelian Fourier transform diagonalizes them, so a
platform whose associated hidden subgroup problem is hard should resist the analogous attack. The
implicit premise is that the attack on the abelian scheme \emph{was} an HSP algorithm, so that HSP
hardness would block its non-abelian analogue.

\paragraph{Why the assumption looks plausible} The premise is attractive for three reasons. The
abelian and non-abelian attacks share a subroutine, the Fourier transform over $G$, so they appear to
be instances of one technique. The dihedral HSP is one of the few natural quantum problems without a
polynomial algorithm, so it is a tempting security anchor. And the migration
abelian~$\to$~dihedral visibly increases algebraic structure, which is loosely associated with
hardness.

\paragraph{Why it is false} Unit inversion and the HSP share the transform and nothing else. The HSP
must, after transforming, recover a hidden subgroup from coset states; this is the step that resists
efficient algorithms, and for the dihedral group it is the entire
difficulty~\cite{kuperberg2005,regev2004}. Unit inversion forms no coset state and recovers no
subgroup. By Theorem~\ref{thm:gui} it applies $\Phi$, inverts each block, and applies $\Phi^{-1}$;
the hard step of the HSP is absent from the computation. The dihedral group is the standard witness
that an efficient Fourier transform need not yield an efficient HSP algorithm~\cite{kuperberg2005},
and unit inversion needs only the efficient half of that witness. Consequently dihedral-HSP hardness,
however solid, cannot protect a scheme whose security reduces to unit inversion: the two hardness
questions concern different computational problems.

\begin{proposition}[HSP hardness does not imply inversion hardness]\label{prop:independence}
There is a family of groups---the dihedral groups $D_n$---for which unit inversion is
quantum-polynomial (Theorem~\ref{thm:easy} and Corollary~\ref{cor:dihedral}) while the hidden
subgroup problem admits only subexponential-time algorithms~\cite{kuperberg2005}. Hence the hardness
of the HSP does not transfer to unit inversion.
\end{proposition}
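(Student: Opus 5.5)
The proposition is an existence statement, so the plan is to exhibit the dihedral family $D_n$ and check the two halves for it separately. The conclusion about non-transfer then follows by the standard logical step: a family on which one problem is easy and the other is not known to be easy refutes any uniform reduction from HSP hardness to inversion hardness.

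For the first half, I would invoke Theorem~\ref{thm:easy} and verify its three hypotheses for $\F[D_n]$.
\begin{enumerate}[leftmargin=1.4em,itemsep=2pt]
\item An efficient generalized Fourier transform exists: the explicit dihedral QFT of~\cite{dihedralgauge2024}, and also the supersolvable construction of~\cite{beals1997}, since $D_n$ is supersolvable.
\item Semisimplicity: by Maschke's theorem, this holds whenever $\operatorname{char}\F\nmid 2n$.
\item Block size: every irreducible representation of $D_n$ has dimension $1$ or $2$, since $D_n$ has an abelian subgroup of index $2$. Hence $d_{\max}\le 2=\poly(\log|G|)$.
\end{enumerate}
Splitting requires adjoining roots of unity of order $n$. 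I would handle this by passing to the extension-field surrogate mentioned in the scope statement, or simply by citing Corollary~\ref{cor:dihedral}, which packages exactly these checks. Block inversion is then inversion of $2\times 2$ matrices, which costs constant time per block. Together with the transforms, this gives quantum-polynomial (indeed classical-polynomial) unit inversion.

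For the second half, I would cite Kuperberg's algorithm~\cite{kuperberg2005}, which runs in time $2^{O(\sqrt{\log n})}$, together with the absence of any known polynomial-time algorithm and Regev's link to lattice problems~\cite{regev2004}. The phrase ``admits only subexponential-time algorithms'' should be read as a statement about the \emph{known} state of the art, not as a proven lower bound. This is the main obstacle: no unconditional superpolynomial lower bound for the dihedral HSP exists, so the separation is necessarily relative to current knowledge. To make the claim rigorous I would phrase it as a conditional implication. If the dihedral HSP is hard (the very assumption designers rely on), unit inversion over $D_n$ is nevertheless easy. Hence the assumption ``HSP hard $\Rightarrow$ inversion hard'' is false on this family, because its conclusion fails unconditionally by the first half.

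Finally, I would remark that the argument never uses a coset state, so no reduction from HSP to inversion is implicit in Theorem~\ref{thm:gui}. This confirms that the two problems only share the transform $\Phi$ of~\eqref{eq:aw}.
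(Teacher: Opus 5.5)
Your proposal is correct and matches the paper's own argument. The inversion half is Theorem~\ref{thm:easy} instantiated through Corollary~\ref{cor:dihedral} ($d_{\max}\le 2$, the efficient dihedral transform, $\mathrm{char}\,\F\nmid 2n$), the HSP half is the citation of~\cite{kuperberg2005}, and your closing no-coset-state remark is the justification given in Section~\ref{sec:hspsep}. You read ``admits only subexponential-time algorithms'' as a statement about known algorithms, which makes the separation conditional on dihedral-HSP hardness; this sharpens a point the paper leaves implicit. One further caveat applies to both arguments: inversion is polynomial in $|G|\lceil\log_2 q\rceil$, whereas Kuperberg's $2^{O(\sqrt{\log|G|})}$ is subexponential only relative to $\log|G|$ and is in fact $|G|^{o(1)}$, so the proposition refutes the designers' inference rather than separating problems of comparable input size.
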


\paragraph{Why this changes how group-ring schemes must be evaluated} The practical upshot is that an
HSP-based security argument is not evidence for a group-ring inversion scheme. A platform must instead
be judged by the two parameters that govern inversion: whether its group algebra is semisimple with
an efficient generalized transform, and whether its largest irreducible dimension is polynomially
bounded. Section~\ref{sec:impact} turns this into a checklist; the point here is that the checklist
does not contain the hidden subgroup problem.

\section{Why the cryptanalytic consequence is not automatic}\label{sec:notautomatic}

A natural first reaction is that our attack is a restatement of the Artin--Wedderburn theorem. We
address this directly, because the distinction is the point of the paper.

\paragraph{What Artin--Wedderburn provides} For a semisimple group algebra over a splitting field, the
theorem asserts the \emph{existence} of an algebra isomorphism
$\Phi:\FG\xrightarrow{\sim}\bigoplus_t M_{d_t}(\F)$. Inverting an element by inverting its blocks is
then immediate, and classical algorithms that compute such decompositions are well
established~\cite{friedlronyai,bremner2011}. At the level of classical existence, the inversion
identity (Theorem~\ref{thm:gui}(i)--(ii)) is elementary, and we claim no novelty for it.

\paragraph{What it does not provide} Artin--Wedderburn is silent on three questions that decide
whether a cryptosystem is broken (Table~\ref{tab:awgap}). First, it does not say which computational
problem a scheme's security actually rests on; in particular it does not say whether the hidden
subgroup problem, the stated source of these schemes' quantum resistance, is engaged by inversion at
all. Second, it is an existence statement, not a cost statement: it does not bound the size or depth
of a circuit realizing $\Phi$, nor identify the parameter that separates tractable from intractable
platforms. Third, it says nothing about realizing the isomorphism \emph{coherently} on a
fault-tolerant machine, with bounded width and clean ancillae, as a sub-routine of a larger quantum
computation.

\begin{table}[t]\centering
\caption{What Artin--Wedderburn supplies, and what the cryptanalytic consequence additionally
requires. The right column is the content of this paper.}
\label{tab:awgap}
\small
\renewcommand{\arraystretch}{1.3}
\begin{tabular}{p{3.7cm}p{4.3cm}}
\toprule
Artin--Wedderburn provides & The cryptanalysis additionally requires \\
\midrule
Existence of $\Phi:\FG\cong\bigoplus_t M_{d_t}(\F)$ & The observation that inversion factors through $\Phi$ alone and never invokes the HSP (\textsection\ref{sec:hspsep}) \\
Blockwise inversion of an element & The parameter $d_{\max}$ that separates easy from hard platforms (\textsection\ref{sec:dichotomy}) \\
A statement about algebras & A statement about a quantum trapdoor's security, with an exact boundary witness (Thm.~\ref{thm:boundary}) \\
No circuit & An explicit reversible realization with $T$-count and width (Thm.~\ref{thm:gui}(iii), \textsection\ref{sec:coherent}) \\
\bottomrule
\end{tabular}
\end{table}

\paragraph{Why the consequence is not automatic} The cryptanalytic content of this paper is precisely
the part Artin--Wedderburn omits. Three things must be added, and none follows from the existence of
$\Phi$:

\begin{enumerate}[label=(\roman*),leftmargin=2.0em,itemsep=1pt]
\item unit inversion factors through the transform alone and never invokes the hidden subgroup
problem (Section~\ref{sec:hspsep}), so the hardness the platforms were chosen for is irrelevant to
the task they pose;
\item tractability is governed by a single representation-theoretic parameter $d_{\max}$, giving a
complexity dichotomy (Section~\ref{sec:dichotomy}) rather than a mere algorithm;
\item the isomorphism admits an explicit, validated reversible realization with stated fault-tolerant
cost (Section~\ref{sec:coherent}).
\end{enumerate}

\paragraph{Why the literature did not draw this conclusion} The decomposition has been available for a
century, yet the schemes it dissolves were proposed, and migrated to non-abelian platforms, after it
was known. The reason is that the decomposition was read as a structural fact about algebras, not as a
statement about the quantum-security of a trapdoor. The security analyses reasoned about the hidden
subgroup problem, which the decomposition does not mention, and the computational-algebra literature
that does use the decomposition was not asking whether a particular cryptographic hardness assumption
survives a quantum adversary. Our contribution is to connect these two literatures and to observe
that the connection removes the assumed hardness rather than relying on it.

\section{Security Model and Adversary}\label{sec:security}

To state which assumptions are affected, and against which adversaries, we fix a model in the
standard provable-security vocabulary, written for group-ring primitives. The targeted
schemes~\cite{hurley2011,mittalkumar2021,idbased2022} share a common shape. A unit $u\in\FG^\times$
is a one-way trapdoor; public operations use maps built from $u$, while the secret functionality
needs $u^{-1}$. We isolate the underlying computational problem.

\begin{definition}[Unit Inversion problem, $\mathrm{UINV}_{G,\F}$]\label{def:uinv}
Given a finite group $G$, a field $\F$, and the coordinate vector of a unit $u\in\FG^\times$, output
the coordinate vector of $u^{-1}$.
\end{definition}

\begin{definition}[$\mathrm{UINV}$ game and advantage]\label{def:game}
For an algorithm $A$, the game $\mathrm{Game}^{\mathrm{UINV}}_{G,\F}(A)$ runs as follows. The
challenger samples $u\stackrel{\$}{\leftarrow}\FG^\times$ and sends $u$ to $A$; $A$ returns
$v\in\FG$; and $A$ wins if $uv=vu=1$. We write
$\mathrm{Adv}^{\mathrm{UINV}}_{G,\F}(A)=\Pr[A\text{ wins}]$.
\end{definition}

We distinguish two adversaries. A classical (PPT) adversary is the one the proposals are designed
against; its best known inversion is dense linear algebra in $\Theta(|G|^\omega)$ field operations.
A quantum (BQP) adversary is the class against which post-quantum security is claimed, and the one
our algorithm inhabits. We also separate three access levels: key recovery (KR), chosen-plaintext
(CPA), and chosen-ciphertext (CCA). Our algorithm operates at the KR surface, which is the weakest
and therefore the most consequential: from the public unit $u$ alone it returns $u^{-1}$, and a KR
break trivially gives CPA and CCA breaks.

\begin{proposition}[Polynomial-time key recovery under the model assumptions]\label{prop:kr}
Let $G$ be a finite group and $\F$ a splitting field with $\mathrm{char}\,\F\nmid|G|$, admitting an
efficient quantum Fourier transform with $d_{\max}=\poly(\log|G|)$. Then there is a quantum
polynomial-time adversary $A^*$ with $\mathrm{Adv}^{\mathrm{UINV}}_{G,\F}(A^*)=1$.
\end{proposition}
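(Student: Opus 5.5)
The plan is to build $A^*$ explicitly as transform, blockwise inversion, inverse transform, and then check that it is correct and runs in polynomial time. On input the coordinate vector $(a_g)_{g\in G}$ of $u$, the adversary computes $\Phi(u)=(U_1,\dots,U_h)$ with $U_t=\sum_g a_g\rho_t(g)\in M_{d_t}(\F)$ as in Eq.~\eqref{eq:aw}. It then inverts each block $U_t$ by Gaussian elimination over $\F$, or coherently via the Coherent Block Inversion circuit (Lemma~\ref{lem:cbi}). Finally it applies $\Phi^{-1}$ to $(U_1^{-1},\dots,U_h^{-1})$ and outputs the result $v$. This is exactly the algorithm that Theorem~\ref{thm:gui} and Theorem~\ref{thm:easy} analyze, so I would mostly assemble those results and verify that their hypotheses match those of the present statement.

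\emph{Correctness.} Maschke's theorem makes $\FG$ semisimple because $\mathrm{char}\,\F\nmid|G|$. Since $\F$ is a splitting field, $\Phi$ is an algebra isomorphism, so $u$ is a unit exactly when every $U_t$ is invertible. Because the challenger samples $u\in\FG^\times$, every $U_t$ is indeed invertible. The componentwise product $(U_t)(U_t^{-1})=(U_t^{-1})(U_t)=(I_{d_t})$ equals $\Phi(1)$, and injectivity and multiplicativity of $\Phi$ then give $uv=vu=1$ for $v=\Phi^{-1}((U_t^{-1})_t)$. The algorithm is deterministic in its algebra, and any quantum transform used is exact, so the adversary wins with probability $1$, which gives $\mathrm{Adv}^{\mathrm{UINV}}_{G,\F}(A^*)=1$.

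\emph{Cost.} The transform and its inverse are polynomial by the efficient-QFT hypothesis. Inverting block $t$ costs $O(d_t^{\omega})$ field operations, or the gate count of Lemma~\ref{lem:cbi}, which is $\poly(d_t,\log|\F|)$. Summing over $t$ and using $\sum_t d_t^2=|G|$ bounds the total by $|G|\cdot d_{\max}^{\omega-2}$.

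\emph{Main obstacle.} The obstacle is in what ``polynomial'' means here. The input and output coordinate vectors have length $|G|$, so merely reading them is linear in $|G|$. The claim is therefore polynomial in the input size $|G|\log|\F|$, as in Theorem~\ref{thm:easy}, and the bound $d_{\max}=\poly(\log|G|)$ makes the blockwise work $|G|\cdot\poly(\log|G|)$. A second subtlety is the coherent setting. If the blocks are processed in superposition, Lemma~\ref{lem:cbi} is needed so that the inversion leaves clean ancillae and the inverse transform uncomputes correctly, which keeps the success probability exactly $1$ rather than merely close to it. I would discharge both points by citing Theorem~\ref{thm:gui}(iii) for the circuit and Theorem~\ref{thm:easy} for the overall bound.
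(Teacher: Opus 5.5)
Your proposal is correct and follows the paper's proof. In both, $A^*$ applies $\Phi$, inverts each block, and applies $\Phi^{-1}$; success probability $1$ comes from the blockwise identity of Theorem~\ref{thm:gui}(i), and polynomial running time comes from Theorem~\ref{thm:gui}(iii) together with the bound $\sum_t d_t^\omega\le d_{\max}^{\omega-2}|G|$ of Theorem~\ref{thm:reduction}. You simply spell out the Maschke/Artin--Wedderburn correctness step and the input-size caveat that the paper leaves inside those cited results.
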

\begin{proof}
$A^*$ runs the inversion algorithm of Theorem~\ref{thm:gui}: it applies $\Phi$, inverts each block,
and applies $\Phi^{-1}$, returning $v=u^{-1}$. By Theorem~\ref{thm:gui}(i) the output satisfies
$uv=vu=1$ for every unit, so the success probability is $1$. By Theorem~\ref{thm:gui}(iii) and
Theorem~\ref{thm:reduction} the running time is polynomial in the input size under the stated
conditions.
\end{proof}

\begin{corollary}[Security implications for semisimple group-ring platforms]\label{cor:impl}
Let $S$ be a group-ring scheme over a platform satisfying the hypotheses of
Proposition~\ref{prop:kr}, whose security (KR, CPA, or CCA) is established by a reduction showing
that breaking $S$ is at least as hard as solving $\mathrm{UINV}_{G,\F}$. Then $S$ is insecure
against quantum adversaries under those conditions: $A^*$ recovers the trapdoor from public data
alone and wins the corresponding game with probability $1$, in time polynomial in the input size.
\end{corollary}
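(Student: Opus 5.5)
The plan is a direct reduction argument: compose the security reduction assumed for $S$ with the adversary $A^*$ of Proposition~\ref{prop:kr}. The corollary's hypothesis supplies a reduction showing that breaking $S$ (in the KR, CPA, or CCA sense) is at least as hard as solving $\mathrm{UINV}_{G,\F}$. The intended content is the converse direction: the secret functionality of $S$ is computable from $u^{-1}$, and the unit $u$ (or its coordinate vector) is derivable from public data. I would make this explicit first, fixing that the trapdoor of $S$ is exactly $u^{-1}$, so that an oracle for $\mathrm{UINV}_{G,\F}$ yields a key-recovery adversary against $S$ with only polynomial overhead. This is the shape described in Section~\ref{sec:security}.

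Next I would instantiate that oracle with $A^*$. Under the hypotheses of Proposition~\ref{prop:kr}, $A^*$ runs in quantum polynomial time and has $\mathrm{Adv}^{\mathrm{UINV}}_{G,\F}(A^*)=1$. Its correctness comes from Theorem~\ref{thm:gui}(i): it succeeds on every unit, not just on the uniform distribution of Definition~\ref{def:game}. So the distribution over which $S$ samples its key does not matter, and the KR adversary outputs the true trapdoor with probability $1$.

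Finally I would lift KR to CPA and CCA. Given $u^{-1}$, the adversary runs the honest secret operation (decryption or signing) itself. It therefore wins the IND-CPA or IND-CCA game, or the forgery game, with the maximal advantage the game permits, and it makes no decryption-oracle queries. The total running time is the polynomial cost of $A^*$ plus the polynomial cost of the scheme's own secret algorithm.

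The main obstacle is the first step. A reduction ``breaking $S$ is at least as hard as $\mathrm{UINV}$'' by itself only shows that attacking $S$ lets one invert units, not that inverting units breaks $S$. The corollary needs the trapdoor direction. I would handle this by reading the hypothesis, as the paper's model does, as saying that $u^{-1}$ is the full secret key and that $u$ is public or publicly computable. I would state this identification explicitly, and note that schemes whose security rests on a separate assumption are excluded, as Section~\ref{sec:scope} already stipulates.
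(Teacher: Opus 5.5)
Your proposal is correct and follows the argument the paper leaves implicit: run $A^*$ from Proposition~\ref{prop:kr} on the public unit, use Theorem~\ref{thm:gui}(i) so that success holds for every unit regardless of how $S$ samples keys, and observe that a KR break trivially gives CPA and CCA breaks. Your caveat about the direction of the reduction is well taken: the literal hypothesis that breaking $S$ is at least as hard as $\mathrm{UINV}_{G,\F}$ would only make the security proof vacuous, so the conclusion genuinely rests on the trapdoor shape of Section~\ref{sec:security} ($u$ public or publicly computable, $u^{-1}$ the secret key), which the paper assumes tacitly in the phrase ``recovers the trapdoor from public data alone.''
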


These conditions hold for the abelian platforms, recovering the result of~\cite{dooms2025}; for the
dihedral platforms advocated for quantum resistance (Corollary~\ref{cor:dihedral}); and for all
small non-abelian platforms of bounded block dimension. The regimes that escape are the modular case
(Section~\ref{sec:boundary}) and platforms whose representation dimension grows super-polynomially
(Section~\ref{sec:dichotomy}). Because the argument targets the hardness assumption rather than a
particular padding or transform, oracle-based armouring does not help, as we make precise in
Section~\ref{sec:security-impl}. We also note that the schemes
of~\cite{hurley2011,mittalkumar2021} carry no formal security reduction, so for those the
consequence is against the intended hardness assumption rather than against a published theorem.

\section{Security Implications and Affected Schemes}\label{sec:security-impl}

Before the detailed algorithm and proofs, we state what the result means for deployed and proposed
constructions, since this is what most readers will want first. The detailed cost analysis follows
in Sections~\ref{sec:main}--\ref{sec:dichotomy}.

Table~\ref{tab:affected} classifies the schemes by the three platform properties that matter for the
method: semisimplicity, an efficient QFT, and bounded block dimension. The abelian schemes are
covered by~\cite{dooms2025}; the dihedral and small non-abelian schemes by the present work; the
large-block platforms remain open by Conjecture~\ref{conj:hard}; and the modular regime escapes by
Theorem~\ref{thm:boundary}, with a candidate construction proposed in Section~\ref{sec:surviving}.
The operational message is simple: moving from an abelian to a dihedral or other small platform does
not, under these conditions, restore quantum security, and harvest-now-decrypt-later exposure should
be assessed on that basis.

\begin{table}[t]\centering
\caption{Security taxonomy of the affected schemes. ``ss.''~means semisimple and ``bdd.\
irr.''~means bounded irreducible (block) dimension; ``vulnerable'' is meant under the model
assumptions of Section~\ref{sec:security}. Verdicts follow Theorems~\ref{thm:gui},
\ref{thm:algebra}, and~\ref{thm:boundary} and Conjecture~\ref{conj:hard}.}
\label{tab:affected}
\footnotesize\setlength{\tabcolsep}{3pt}
\begin{tabular}{p{2.46cm}p{1.80cm}cccp{2.62cm}}
\toprule
Scheme family & Platform & ss. & QFT & bdd. irr. & Verdict (under model assumptions) \\
\midrule
Hurley--Hurley (2011) & abelian/cyclic & yes & yes & yes & vulnerable (parent + this work) \\
Mittal--Kumar (2021) & abelian/cyclic & yes & yes & yes & vulnerable (parent + this work) \\
matrix/ElGamal over group ring & dihedral $D_n$ & yes & yes & yes & vulnerable (this work) \\
ID-based group-ring encryption & small non-abelian & yes & yes & yes & vulnerable (this work) \\
hypothetical large-irrep platform & symmetric $S_m$ & yes & unknown & no & open (block step exp.) \\
hypothetical modular platform & any, $p\mid|G|$ & no & n/a & n/a & method prerequisite fails \\
\bottomrule
\end{tabular}
\end{table}

A natural defense to consider is a random-oracle transform, the usual route to chosen-ciphertext
security. We show it does not work here, and the reason is instructive: it is structural, not
quantitative. Such transforms are analyzed in the quantum random-oracle model (QROM), where the hash
function is modelled as an oracle a quantum adversary may query in superposition. A transform of this
kind hardens the wrapper around a trapdoor, but our attack defeats the trapdoor itself, so the
wrapper is bypassed entirely.

\begin{proposition}[QROM-invariance]\label{prop:qrom}
Let $S$ be a group-ring scheme over a platform satisfying the hypotheses of
Proposition~\ref{prop:kr}, and let $S_H=\mathsf{T}^H(S)$ be obtained by a random-oracle transform
$\mathsf{T}$ with quantum-accessible oracle $H$ (for instance the $\mathrm{FO}^{\bot}$
key-encapsulation-mechanism (KEM) compiler). Then there is an efficient adversary that makes $0$
queries to $H$ and recovers the decapsulation secret of $S_H$, winning key recovery with probability
$1$. Hence the QROM query complexity to break $S_H$ is $q_H=0$ for key recovery and $q_H\le 1$ for
session-key recovery on a specific challenge ciphertext, independent of any measure-and-reprogram
analysis.
\end{proposition}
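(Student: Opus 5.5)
The argument is a direct reduction to Proposition~\ref{prop:kr}. The key observation is that a random-oracle transform such as $\mathrm{FO}^{\bot}$ leaves the key-generation algorithm of the underlying scheme unchanged. The public key of $S_H$ is therefore the public key of $S$, possibly with public auxiliary data such as a seed or hash description. The decapsulation secret is the trapdoor $u^{-1}$ of $S$, possibly together with an implicit-rejection value whose only role is on invalid ciphertexts. I would first fix this structural fact for the class of transforms $\mathsf{T}$ covered by the statement: $\mathsf{T}$ wraps $\mathrm{Enc}/\mathrm{Dec}$ of $S$ with hashing and re-encryption, but keeps $\mathrm{KeyGen}$, and the secret needed to decapsulate valid ciphertexts is exactly the secret of $S$.

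Given this, the adversary $B$ against $S_H$ reads the public unit $u$ (or the public data from which $u$ is determined) and runs $A^*$ from Proposition~\ref{prop:kr}. It outputs $u^{-1}$ with probability $1$ in quantum polynomial time. No step of $A^*$ touches $H$; the computation is $\Phi$, blockwise inversion, and $\Phi^{-1}$, all independent of the oracle. So $B$ makes $q_H=0$ queries and wins key recovery with probability $1$.

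For the session-key claim, given a challenge ciphertext $c$, $B$ applies the decryption of $S$ with the recovered $u^{-1}$ to obtain the message $m$. It then evaluates the KEM key derivation, for example $K=H(m)$ or $H(m,c)$, with one classical query. Since decryption of valid ciphertexts is correct, perhaps up to the scheme's correctness error, the derived key equals the real session key, giving $q_H\le 1$. The re-encryption check is not needed, because $B$ knows $c$ is honestly generated.

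I expect the main obstacle to be the first step: making precise, uniformly over the transforms intended by ``for instance $\mathrm{FO}^{\bot}$,'' that the secret of $S_H$ reduces to $u^{-1}$. Implicit-rejection variants add a random secret $s$ that cannot be recovered from the public key. I would handle this by defining ``decapsulation secret'' as the functionality on valid ciphertexts, or by noting that $s$ is never used on honest ciphertexts. Finally, I would remark that the argument never uses the oracle's distribution or any reprogramming, which explains the phrase ``independent of any measure-and-reprogram analysis.''
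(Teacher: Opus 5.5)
Your proposal is correct and follows the paper's proof. The public unit $u$ survives the transform, $A^*$ of Proposition~\ref{prop:kr} recovers $u^{-1}$ without touching $H$ (so $q_H=0$), and decrypting the challenge followed by one query $K=H(m)$ gives session-key recovery with $q_H\le 1$. You also make explicit two points the paper's short proof leaves implicit. First, an implicit-rejection seed $s$ cannot be recovered from the public key, so ``decapsulation secret'' must be read as the functionality on valid ciphertexts. Second, the re-encryption check can be skipped to keep the query count at one.
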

\begin{proof}
The transform $\mathsf{T}$ wraps encapsulation and decapsulation with calls to $H$ but leaves the
public trapdoor unit $u$ in the public parameters. Running $A^*$ of Proposition~\ref{prop:kr} on $u$
outputs $u^{-1}$ in polynomial time using only group-ring arithmetic and $\Phi$, issuing no oracle
query, so key recovery succeeds with $q_H=0$. For IND-CCA on a fixed challenge, the attacker holding
$u^{-1}$ decapsulates $m$ and evaluates $K=H(m)$ with a single query. No property of $H$ is used.
\end{proof}

The reading is that random-oracle transforms protect the wrapper. They prevent the exploitation of
malleability or decryption-oracle leakage, while assuming a sound trapdoor. Since our method defeats
the trapdoor itself by algebraic structure, the wrapper is beside the point. The only defenses that
change the picture alter the algebraic setting so that $\mathrm{UINV}$ is no longer easy, by moving
to the non-semisimple regime or to super-polynomial representation dimension. Both are statements
about the platform, not about the oracle. A self-contained QROM background and the
$\mathrm{FO}^{\bot}$ specifics are given in Supplementary~\textsection S10.

\section{Efficient Inversion via the Generalized Fourier Transform}\label{sec:main}

The easy direction turns on identifying \emph{which} algebraic fact controls the cost---the block
decomposition---and on the observation that the quantum hardness anchor, the HSP, plays no role in
it. We state the fact first and then turn it into a cost estimate.

\begin{theorem}[Generalized unit inversion]\label{thm:gui}
Let $G$ be a finite group and $\F$ a splitting field for $G$ with $\mathrm{char}\,\F\nmid|G|$, and
let $\Phi$ be as in Eq.~\eqref{eq:aw}. Then:
\begin{enumerate}[label=(\roman*),leftmargin=2.2em]
\item $u\in\FG$ is a unit iff every block $\Phi_t(u)\in M_{d_t}(\F)$ is invertible, in which case
$\Phi(u^{-1})=(\Phi_t(u)^{-1})_{t=1}^{h}$.
\item Unit inversion therefore reduces to one evaluation of $\Phi$, $h$ matrix inversions of total
cost $\sum_t O(d_t^\omega)$, and one evaluation of $\Phi^{-1}$.
\item This reduction is realized by an explicit quantum circuit: the generalized QFT $\Phi$ followed
by the Coherent Block Inversion circuit of Section~\ref{sec:coherent}. When the QFT over $G$ has an
efficient (polylogarithmic-depth) circuit and $d_{\max}=\poly(\log|G|)$, the result has total
$T$-count $O(|G|\,d_{\max}\,\mathrm{polylog}\,q)$ and width $O(|G|\log q)$. Unit inversion is then
efficient, that is, polynomial in the input size $|G|\lceil\log_2 q\rceil$.
\end{enumerate}
\end{theorem}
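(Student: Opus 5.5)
The plan is to prove the three parts in order, each resting on the previous one. Part (i) is pure algebra, part (ii) is a cost count built on (i), and part (iii) is a circuit construction that implements (ii).

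\textbf{Part (i).} I would use only the fact that $\Phi$ in Eq.~\eqref{eq:aw} is an isomorphism of unital algebras. This holds by Maschke's theorem, since $\mathrm{char}\,\F\nmid|G|$, together with Artin--Wedderburn over a splitting field.

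Any unital algebra isomorphism restricts to an isomorphism of unit groups. So $u$ is a unit in $\FG$ iff $\Phi(u)$ is a unit of $\bigoplus_t M_{d_t}(\F)$.

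Multiplication in a direct sum is componentwise and its identity is $(I_{d_1},\dots,I_{d_h})$. Hence a tuple is invertible iff each component $\Phi_t(u)$ is invertible, and its inverse is the tuple of blockwise inverses.

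Applying $\Phi^{-1}$ gives $u^{-1}=\Phi^{-1}\bigl((\Phi_t(u)^{-1})_t\bigr)$. Uniqueness of two-sided inverses then gives $uv=vu=1$.

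\textbf{Part (ii).} This is immediate from (i). Compute $\Phi(u)$, invert each of the $h$ blocks by fast matrix inversion at cost $O(d_t^\omega)$ field operations each, then apply $\Phi^{-1}$.

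For a sense of scale, I would note that $\sum_t d_t^\omega\le d_{\max}^{\omega-2}\sum_t d_t^2=d_{\max}^{\omega-2}|G|$. This is the bound the later cost statements actually use.

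\textbf{Part (iii).} Here I would assemble three circuits in sequence and add up their costs:
\begin{enumerate}[label=(\alph*),leftmargin=2.0em,itemsep=1pt]
\item the assumed efficient QFT circuit for $\Phi$;
\item the Coherent Block Inversion circuit of Section~\ref{sec:coherent}, applied in parallel to each block, which I would cite via Lemma~\ref{lem:cbi};
\item the inverse QFT, obtained by reversing the circuit in (a).
\end{enumerate}

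The register holding $u$ consists of $|G|$ coefficients of $\lceil\log_2 q\rceil$ qubits each. Because $\sum_t d_t^2=|G|$, the transformed blocks occupy exactly the same register. This gives width $O(|G|\log q)$, plus the clean ancillae that Lemma~\ref{lem:cbi} guarantees are returned to zero.

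For the $T$-count, I would take Lemma~\ref{lem:cbi} to charge $O(d_t^3\,\mathrm{polylog}\,q)$ per block, or $O(d_t^{\omega})$ with fast arithmetic. Summing over blocks, $\sum_t d_t^3\le d_{\max}\sum_t d_t^2=d_{\max}|G|$, which yields $O(|G|\,d_{\max}\,\mathrm{polylog}\,q)$. The QFT contributions are lower order by hypothesis. When $d_{\max}=\poly(\log|G|)$ this total is polynomial in $|G|\lceil\log_2 q\rceil$.

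\textbf{Main obstacle.} The delicate step is (iii): the block inversion has to be coherent and reversible with clean ancillae. Computing $\Phi_t(u)^{-1}$ into fresh registers and then uncomputing $\Phi_t(u)$ needs the inverse map, which itself is a second inversion. I would handle this with the standard trick. Compute $M^{-1}$ out of place using Gaussian elimination in which the garbage is recorded, copy the result, and run the elimination in reverse. Then use the fact that $M\mapsto M^{-1}$ is an involution to swap the input register for the output register, uncomputing $M$ from $M^{-1}$ by a second pass.

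Non-invertible blocks do not arise, because the input is promised to be a unit and part (i) then makes every block invertible.

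A secondary issue is that the QFT acts on amplitudes, whereas here the coefficient vector is classical data over $\F$. For this step I would interpret ``QFT over $G$'' as the reversible arithmetic circuit that evaluates $\Phi$ on field elements, namely a fast generalized Fourier transform in the sense of \cite{beals1997,moore2006}. Its gate count is $O(|G|\,\mathrm{polylog}|G|)$ field operations, which I would state explicitly as the hypothesis used.
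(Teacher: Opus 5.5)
Your proposal follows the paper's proof. You get (i) from $\Phi$ being a unital algebra isomorphism plus componentwise invertibility in $\bigoplus_t M_{d_t}(\F)$, and (ii) as its algorithmic restatement. You get (iii) by composing the transform, Lemma~\ref{lem:cbi} on disjoint block sub-registers in parallel, and $\Phi^{-1}$, summing the block $T$-counts through $\sum_t d_t^3\le d_{\max}|G|$; the paper leaves this sum implicit in ``this yields the stated $T$-count and width.'' You also observe that a QFT acting on amplitudes does not evaluate $\Phi$ on an $\F_q$-valued coefficient register, so the transform stage must be a reversible classical circuit for the $\F_q$-linear map $\Phi$. That observation is correct, and it is exactly what the paper's sketch passes over when it says the QFT ``realizes $\Phi$ as a reversible $\F_q$-linear map.''

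That reading, however, makes one of your own steps fail. You say the QFT contributions are lower order by hypothesis, but the hypothesis you then adopt is $O(|G|\,\mathrm{polylog}|G|)$ field operations, each costing up to $\mathrm{polylog}\,q$ $T$-gates. From this you can only bound the transform stage by $O(|G|\,\mathrm{polylog}|G|\,\mathrm{polylog}\,q)$. That exceeds the block-stage bound $O(|G|\,d_{\max}\,\mathrm{polylog}\,q)$ by a $\mathrm{polylog}|G|$ factor whenever $d_{\max}$ is bounded, in particular for $D_n$, the case the theorem targets. What your argument actually proves is $O\bigl(|G|\,(d_{\max}+\mathrm{polylog}|G|)\,\mathrm{polylog}\,q\bigr)$, i.e.\ the stated bound only up to $\mathrm{polylog}|G|$ factors. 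The untilded bound needs the transform to be negligible, which is the amplitude-versus-register conflation you rejected. The paper's own cost statements elsewhere are consistent with the weaker form: the $\widetilde{O}$ in Theorem~\ref{thm:reduction}(i), and ``the two transforms plus $\sum_t T(d_t,q)$'' in Section~\ref{sec:coherent}. You should either state the $T$-count with the extra factor, or make a transform-stage $T$-count of $O(|G|\,d_{\max}\,\mathrm{polylog}\,q)$ an explicit hypothesis. The efficiency conclusion is unaffected; it even holds without the $d_{\max}$ hypothesis, since $\sum_t d_t^3\le d_{\max}|G|\le|G|^{3/2}$.

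One smaller point concerns non-invertible blocks. The unit promise gives correctness but not unitarity: a garbage-recording Gaussian elimination must still act reversibly on singular basis states, and scaling a row by a zero pivot is not invertible. This is why Lemma~\ref{lem:cbi} carries a status qubit. Because you cite that lemma for the block stage the issue is covered, but your own sketch of the compute--copy--uncompute trick would need the same control.
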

\begin{proof}[Proof sketch; the full argument is given in Supplementary~\textsection S1.]
Over the splitting field, Maschke's theorem and Artin--Wedderburn make $\Phi$ an algebra
isomorphism, so it sends units to units with $\Phi(u^{-1})=\Phi(u)^{-1}$. Invertibility in a finite
product of matrix algebras is blockwise, which gives~(i). Part~(ii) is the algorithmic restatement:
evaluate $\Phi$, invert each block by Gaussian elimination, and evaluate $\Phi^{-1}$. For~(iii) the
QFT realizes $\Phi$ as a reversible $\F_q$-linear map, and each block is inverted by the reversible
circuit of Section~\ref{sec:coherent} on a disjoint sub-register, so the blocks invert in parallel.
This yields the stated $T$-count and width. The claim is efficiency in the input size, not
polylogarithmic total time, since the input already has size $\Theta(|G|\log q)$.
\end{proof}

\begin{remark}\label{rem:scope}
Parts~(i)--(ii) are a statement of classical algebra, proved without conditions and used as the
correctness oracle in our experiments. Part~(iii) is realized concretely by the circuit of
Section~\ref{sec:coherent}, with explicit gate, $T$-count, and qubit accounting.
\end{remark}

The dihedral case is the one most often advanced for quantum resistance. It is the sharpest test of
the separation in Section~\ref{sec:hspsep}, because here the security rationale (dihedral-HSP
hardness) and the actual task (unit inversion) come apart completely.

\begin{corollary}[Efficient inversion of units in dihedral group rings]\label{cor:dihedral}
Every irreducible representation of $D_n$ has dimension at most $2$, and the QFT over $D_n$ has a
circuit of size $\poly(\log n)$~\cite{dihedralgauge2024}. Hence unit inversion in $\F[D_n]$ is
efficient, polynomial in the input size, classically and as a fault-tolerant circuit, independently
of the hardness of the dihedral hidden subgroup problem. Group-ring cryptosystems instantiated over
dihedral platforms therefore do not attain quantum security under the conditions of
Theorem~\ref{thm:gui}.
\end{corollary}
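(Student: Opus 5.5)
The plan is to derive the corollary by checking the hypotheses of Theorem~\ref{thm:gui} for $G=D_n$ and then reading off parts~(ii) and~(iii).

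\textbf{Block dimensions.} First I would bound the block dimensions directly. Write $D_n=\langle r,s\mid r^n=s^2=1,\ srs^{-1}=r^{-1}\rangle$. The cyclic subgroup $\langle r\rangle$ has index $2$, so every irreducible representation has dimension at most $[D_n:\langle r\rangle]=2$. To see this concretely, restrict an irreducible $\rho$ to $\langle r\rangle$. Over a splitting field with $\mathrm{char}\,\F\nmid 2n$, this restriction contains an eigenvector $v$ with $\rho(r)v=\zeta v$. The span of $v$ and $\rho(s)v$ is then stable under both generators, so $\dim\rho\le 2$.

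Listing the representations explicitly gives:
\begin{itemize}
\item four one-dimensional characters if $n$ is even, and two if $n$ is odd;
\item the two-dimensional representations $r\mapsto\mathrm{diag}(\zeta^k,\zeta^{-k})$ and $s\mapsto\bigl(\begin{smallmatrix}0&1\\1&0\end{smallmatrix}\bigr)$, for $0<k<n/2$.
\end{itemize}
A check of $\sum_t d_t^2=2n$ confirms this is the complete list. Hence $d_{\max}=2$, which is certainly $\poly(\log|G|)$.

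\textbf{The transform and the blockwise cost.} For the transform I would simply invoke the circuit of~\cite{dihedralgauge2024}, which realizes $\Phi$ in size $\poly(\log n)$. Theorem~\ref{thm:gui}(iii) then gives total $T$-count $O(2n\cdot 2\,\mathrm{polylog}\,q)$, which is polynomial in the input size $2n\lceil\log_2 q\rceil$. The classical statement comes from part~(ii): there are $h=O(n)$ blocks, each of size at most $2\times2$, so the inversions cost $O(n)$ field operations. Evaluating $\Phi$ and $\Phi^{-1}$ classically is the explicit linear map above. It can be done in $O(n\log n)$ operations, because it decomposes into an FFT over $\langle r\rangle$ on the two cosets $\langle r\rangle$ and $s\langle r\rangle$. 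Each block inverse can be written in closed form, since the inverse of a $2\times 2$ matrix is its adjugate divided by its determinant.

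\textbf{Independence from the HSP.} The independence claim is structural rather than a computation. The algorithm consists of $\Phi$, blockwise inversion, and $\Phi^{-1}$. It forms no coset states, so Proposition~\ref{prop:independence} applies. The final sentence about cryptosystems then follows from Corollary~\ref{cor:impl}.

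\textbf{Main obstacle.} The step I expect to be delicate is the field hypothesis. Theorem~\ref{thm:gui} requires $\F$ to be a splitting field with $\mathrm{char}\,\F\nmid 2n$. This requires $\F$ to contain a primitive $n$th root of unity, or at least $\zeta+\zeta^{-1}$ with blocks over that subfield. For a scheme over a fixed $\F_q$, I would handle this in one of two ways:
\begin{itemize}
\item pass to the extension $\F_{q^m}$ containing $\zeta$, where $m=\mathrm{ord}_n(q)$, and note that $u^{-1}$ computed there lies in $\F_q[D_n]$ by uniqueness of inverses; or
\item use the Wedderburn decomposition over $\F_q$ directly, whose blocks are $M_{\le 2}$ over extension fields.
\end{itemize}
In the first approach $m$ may be as large as $n$, which blows up the cost. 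So I would favour the second, grouping Galois orbits of $\zeta^k$ so that the total dimension remains $2n$ over $\F_q$. The remaining concern is that~\cite{dihedralgauge2024} supplies the quantum circuit for exactly this form of the transform; I would state the corollary under that assumption.
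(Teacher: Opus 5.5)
Your proposal is correct and follows the paper's own route. The paper justifies the corollary by $d_{\max}\le 2$ together with the cited dihedral QFT~\cite{dihedralgauge2024}, fed into Theorem~\ref{thm:gui}(ii)--(iii); your explicit list of irreducibles, the coset-wise FFT for the classical transform, and the extension-field treatment of non-splitting fields (which the paper covers in Table~\ref{tab:assumptions} and its limitations discussion) are added detail rather than a different argument.

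Two small fixes are needed. First, a splitting field need not contain $\zeta_n$ (e.g.\ $\F_p$ with $p$ odd, $n\ge 3$, and $p\equiv -1 \bmod n$), so run your eigenvector argument over $\overline{\F}$; this suffices because irreducibles over a splitting field stay irreducible, of the same dimension, after extending scalars. Second, do not cite Proposition~\ref{prop:independence} for the HSP clause, because the paper derives that proposition from this very corollary; your direct observation that the pipeline never forms a coset state is what actually supports that claim.
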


We verified Theorem~\ref{thm:gui}(i)--(ii) and Corollary~\ref{cor:dihedral} exactly over $\F_p$. For
each tested $D_n$ we computed $u^{-1}$ by Wedderburn block inversion and confirmed both that
$u\,u^{-1}=e$ and that the result agrees with the inverse obtained from the full regular
representation. Table~\ref{tab:dihedral} reports complete agreement together with the operation
counts. The block route grows linearly in $|G|$ because all blocks have dimension at most two, while
the generic route is cubic. Two fully worked numerical examples, $\F_5[D_4]$ and $\F_7[D_6]$, with
the explicit block matrices before and after the transform, are given in
Supplementary~\textsection S3.

\begin{table}[t]\centering
\caption{Exact inversion of units in dihedral group rings over $\F_p$ ($p\equiv 1\bmod n$).
Wedderburn block inversion matches the generic inverse on every tested unit and is markedly cheaper.
``Wedderburn ops''~$=\sum_t d_t^3$ and ``regular ops''~$=|G|^3$.}
\label{tab:dihedral}
\begin{tabular}{ccccccc}
\toprule
$D_n$ & $\mathbb{F}_p$ & order & verified & $d_{\max}$ & Wedderburn ops & regular ops \\
\midrule
$D_{3}$ & $\mathbb{F}_{7}$ & 6 & 119/119 & 2 & 10 & 216 \\
$D_{5}$ & $\mathbb{F}_{11}$ & 10 & 142/142 & 2 & 18 & 1000 \\
$D_{7}$ & $\mathbb{F}_{29}$ & 14 & 175/175 & 2 & 26 & 2744 \\
$D_{11}$ & $\mathbb{F}_{23}$ & 22 & 138/138 & 2 & 42 & 10648 \\
$D_{13}$ & $\mathbb{F}_{53}$ & 26 & 164/164 & 2 & 50 & 17576 \\
\bottomrule
\end{tabular}
\end{table}

We now restate the reduction quantitatively, separating the unconditional classical bound from the
quantum claim that depends on an efficient transform.

\begin{theorem}[Reduction with explicit cost]\label{thm:reduction}
Let $\F$ be a splitting field for $G$ with $\mathrm{char}\,\F\nmid|G|$. Then $\mathrm{UINV}_{G,\F}$
reduces to one application of $\Phi$, one of $\Phi^{-1}$, and $h$ blockwise matrix inversions, with
total cost
\begin{equation}\label{eq:cost}
\begin{aligned}
T_{\mathrm{UINV}}(G,\F)&\;\le\;T_\Phi(G)+T_{\Phi^{-1}}(G)+\sum_{t=1}^h O(d_t^\omega)\\
&\;\le\;T_\Phi(G)+T_{\Phi^{-1}}(G)+O\!\big(d_{\max}^{\,\omega-2}|G|\big),
\end{aligned}
\end{equation}
where $\omega<2.371552$~\cite{williams2024}. In particular: (i) classically, using the fast
generalized DFT, $\mathrm{UINV}$ is solvable in
$\widetilde{O}(|G|\,d_{\max}+d_{\max}^{\omega-2}|G|)$ field operations, near-linear in $|G|$ for
bounded $d_{\max}$, to be compared with $\Theta(|G|^\omega)$ for dense inversion; (ii) quantumly, if
$\Phi$ admits a circuit of size $\poly(\log|G|)$ and $d_{\max}=\poly(\log|G|)$, then
$\mathrm{UINV}_{G,\F}$ is solved by a quantum circuit whose transform stage has depth
$\poly(\log|G|)$ and whose block stage uses $\poly(d_{\max},\log|\F|)$ reversible arithmetic per
block.
\end{theorem}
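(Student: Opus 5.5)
The plan is to derive the reduction directly from Theorem~\ref{thm:gui}(i)--(ii) and then bound each of the three stages separately. Given the coordinate vector of $u$, we compute $\Phi(u)=(\Phi_t(u))_t$ at cost $T_\Phi(G)$. We then invert each block $\Phi_t(u)\in M_{d_t}(\F)$ with a fast matrix-inversion algorithm, which costs $O(d_t^\omega)$ field operations. Finally we apply $\Phi^{-1}$ at cost $T_{\Phi^{-1}}(G)$. By Theorem~\ref{thm:gui}(i) the output is $u^{-1}$, and a singular block certifies that $u$ is not a unit. This gives the first line of Eq.~\eqref{eq:cost}.

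For the second line, we bound each term using $d_t\le d_{\max}$. Since $\omega\ge 2$,
\[
\sum_t d_t^\omega=\sum_t d_t^{\omega-2}d_t^2\le d_{\max}^{\,\omega-2}\sum_t d_t^2=d_{\max}^{\,\omega-2}|G|,
\]
where the last step uses $\sum_t d_t^2=|G|$ from Eq.~\eqref{eq:aw}.

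\textbf{Classical bound (i).} We substitute a fast generalized DFT. The step I expect to be the main obstacle is justifying $T_\Phi,T_{\Phi^{-1}}=\widetilde O(|G|\,d_{\max})$. This is not a consequence of Artin--Wedderburn. It is a property of a chosen adapted basis, which comes from the separation-of-variables/subgroup-chain fast Fourier transforms (Beals, Maslen--Rockmore, and the generic construction~\cite{moore2006}). For the groups under consideration, these algorithms give $O(|G|\,d_{\max}\,\mathrm{polylog}|G|)$ operations. I would state this as the hypothesis ``using the fast generalized DFT'', rather than claim it for all $G$. For $\Phi^{-1}$ I would use the Fourier inversion formula
\[
a_g=\frac{1}{|G|}\sum_t d_t\,\tr\!\big(\rho_t(g^{-1})\,\Phi_t(a)\big).
\]
This formula is valid because $\mathrm{char}\,\F\nmid|G|$, so $|G|$ is invertible in $\F$. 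It expresses $\Phi^{-1}$ as an adjoint transform of the same shape, so the same fast algorithm applies with identical asymptotic cost. Adding these terms to the block cost gives the claimed $\widetilde O(|G|\,d_{\max}+d_{\max}^{\omega-2}|G|)$. For comparison, dense inversion costs $\Theta(|G|^\omega)$ because it inverts the $|G|\times|G|$ regular-representation matrix.

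\textbf{Quantum bound (ii).} We assume a QFT circuit for $\Phi$ of size $\poly(\log|G|)$. The transform stage then has depth at most its size, so its depth is $\poly(\log|G|)$; the same holds for the inverse circuit $\Phi^{-1}$. In the block stage, each block occupies a disjoint sub-register of $d_t^2\lceil\log q\rceil$ qubits. On each block we apply a reversible Gaussian elimination, namely the Coherent Block Inversion circuit referenced in Theorem~\ref{thm:gui}(iii). This uses $O(d_t^3)$ reversible $\F_q$ multiply-add and inversion operations, each costing $\poly(\log q)$ gates, so the block stage costs $\poly(d_{\max},\log|\F|)$ per block, as claimed. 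The blocks act on disjoint registers, so they run in parallel. Bennett-style uncomputation keeps the ancillae clean.
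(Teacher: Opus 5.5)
Your proposal is correct and follows the paper's own proof. You derive correctness from Theorem~\ref{thm:gui}(i), use the same estimate $\sum_t d_t^{\omega-2}d_t^2\le d_{\max}^{\omega-2}\sum_t d_t^2=d_{\max}^{\omega-2}|G|$, and then substitute the fast generalized DFT for (i) and a $\poly(\log|G|)$ transform circuit plus the Coherent Block Inversion stage for (ii). Your additions---stating the $\widetilde O(|G|\,d_{\max})$ transform cost as an explicit hypothesis, and bounding $T_{\Phi^{-1}}$ via the Fourier inversion formula (valid since $|G|\in\F^\times$)---make explicit steps the paper simply asserts, rather than taking a different route.
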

\begin{proof}[Proof; the full version is given in Supplementary~\textsection S2.]
Correctness is Theorem~\ref{thm:gui}(i). The second inequality uses that $x\mapsto x^{\omega-2}$ is
nondecreasing together with the Wedderburn identity:
$\sum_t d_t^\omega=\sum_t d_t^{\omega-2}d_t^2\le d_{\max}^{\omega-2}\sum_t d_t^2=d_{\max}^{\omega-2}|G|$.
Claim~(i) substitutes the fast generalized DFT bound for $T_\Phi$; claim~(ii) substitutes a
$\poly(\log|G|)$ quantum circuit and the coherent block stage of Section~\ref{sec:coherent}.
\end{proof}

\section{Generalization to Semisimple Algebras}\label{sec:algebra}

The argument of Section~\ref{sec:main} uses nothing specific to group rings beyond the
decomposition in Eq.~\eqref{eq:aw}. The same principle therefore governs any finite-dimensional
semisimple algebra, which is why several previously unrelated breaks turn out to be instances of one
mechanism rather than independent results.

\begin{theorem}[Algebra inversion]\label{thm:algebra}
Let $A$ be a finite-dimensional semisimple algebra over a splitting field $\F$, so
$A\cong\bigoplus_t M_{d_t}(\F)$, and let $\Psi$ realize this isomorphism. Then $u\in A$ is a unit iff
each block $\Psi_t(u)$ is invertible, $\Psi(u^{-1})=(\Psi_t(u)^{-1})_t$, and unit inversion costs one
evaluation of $\Psi^{\pm1}$ plus $\sum_t O(d_t^\omega)$. If $\Psi$ has an efficient quantum circuit
and $d_{\max}=\poly(\log\dim A)$, unit inversion in $A$ is quantum polynomial time.
\end{theorem}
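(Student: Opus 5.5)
The plan is to rerun the proof of Theorem~\ref{thm:gui} with $\FG$ replaced by $A$ and $\Phi$ replaced by $\Psi$. The only fact that argument used about group rings was the algebra isomorphism in Eq.~\eqref{eq:aw}, and here that isomorphism is given as a hypothesis. The algebraic part comes first. $\Psi$ is an $\F$-algebra isomorphism, so it preserves multiplication and sends $1_A$ to the identity $(I_{d_t})_t$. Hence $uv=vu=1_A$ holds iff $\Psi(u)\Psi(v)=\Psi(v)\Psi(u)=(I_{d_t})_t$. Multiplication in $\bigoplus_t M_{d_t}(\F)$ is componentwise, so this holds iff $\Psi_t(u)\Psi_t(v)=\Psi_t(v)\Psi_t(u)=I_{d_t}$ for every $t$. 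That gives the unit criterion and the formula $\Psi(u^{-1})=(\Psi_t(u)^{-1})_t$. For the converse, if every block is invertible, define $v:=\Psi^{-1}\big((\Psi_t(u)^{-1})_t\big)$; bijectivity of $\Psi$ then gives $uv=vu=1_A$.

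Next comes the cost statement, following Theorem~\ref{thm:reduction}. The algorithm computes $\Psi(u)$, inverts each block by a fast matrix inversion in $O(d_t^\omega)$ field operations (reporting ``not a unit'' if some block is singular), and applies $\Psi^{-1}$. Counting dimensions gives $\dim A=\sum_t d_t^2$, so the same Wedderburn estimate applies: $\sum_t d_t^\omega\le d_{\max}^{\omega-2}\dim A$. I would state this explicitly, since it replaces $|G|$ by $\dim A$ throughout.

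For the quantum claim, the plan is to compose an efficient circuit for $\Psi$ with the Coherent Block Inversion circuit of Section~\ref{sec:coherent}, run in parallel on the disjoint sub-registers holding each block, and then apply the inverse circuit for $\Psi$. The circuit for $\Psi$ is assumed to be reversible and $\F$-linear. With $d_{\max}=\poly(\log\dim A)$, each block costs $\poly(d_{\max},\log|\F|)$. Summed over $h\le\dim A$ blocks, the total is polynomial in the input size $\dim A\cdot\lceil\log|\F|\rceil$. As in Theorem~\ref{thm:gui}(iii), the claim is polynomial time in the input size, not polylogarithmic time.

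I expect the main obstacle to be interface bookkeeping rather than any new mathematics. Lemma~\ref{lem:cbi} and the circuit of Section~\ref{sec:coherent} are stated for blocks coming from a group QFT. I would need to check that they use only the fact that each register holds a $d_t\times d_t$ matrix over $\F$, with no group structure. I would also need to make precise what ``efficient quantum circuit for $\Psi$'' means: a reversible circuit of size polynomial in $\log\dim A$ and $\log|\F|$ acting on the coordinate encoding, with $\Psi^{-1}$ realized by running it backwards. Once those two points are fixed, the proof is a verbatim transcription of the proofs of Theorems~\ref{thm:gui} and~\ref{thm:reduction}.
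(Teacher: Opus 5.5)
Your proposal is correct and follows the paper's argument. The paper justifies Theorem~\ref{thm:algebra} only by remarking that the proofs of Theorems~\ref{thm:gui} and~\ref{thm:reduction} use nothing but the Wedderburn isomorphism, and your transcription with $\Psi$ for $\Phi$ and $\dim A=\sum_t d_t^2$ for $|G|$ carries this out. One fix to your definition of an ``efficient circuit for $\Psi$'': a circuit of size $\poly(\log\dim A,\log|\F|)$ cannot touch all $\dim A\lceil\log_2|\F|\rceil$ coordinate qubits, so it cannot implement a $\Psi$ that mixes coordinates; require polylogarithmic depth, or size polynomial in the input size, which is all your final count uses.
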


Theorem~\ref{thm:algebra} covers twisted group algebras, crossed products, and Hecke-type algebras,
and in particular the finite-dimensional associative-algebra signature schemes
of~\cite{romankov2023}. These become instances of one principle: a cryptosystem whose hard task is
inversion in a semisimple algebra with an efficient generalized Fourier transform does not attain
quantum security. The classical machinery for computing Wedderburn decompositions is well
established~\cite{bremner2011,friedlronyai}. What was not established is that this classical
machinery, realized coherently, dissolves the quantum-security claim of an entire platform family;
our contribution is to draw that consequence and to give an efficient realization for structured
platforms.

\section{The Semisimplicity Boundary}\label{sec:boundary}

The decomposition in Eq.~\eqref{eq:aw}, and hence the whole method, requires semisimplicity. The
point of this section is that the failure of semisimplicity is not just a qualitative caveat. It is
an exact, efficiently checkable condition.

\begin{theorem}[Modular boundary]\label{thm:boundary}
Let $\F$ have characteristic $p$ with $p\mid|G|$. Then $\FG$ is not semisimple: its Jacobson radical
is nonzero, no isomorphism $\FG\cong\bigoplus_t M_{d_t}(\F)$ exists, and the generalized Fourier
transform $\Phi$ is unavailable. Moreover, writing $L$ for the left-regular representation and
$T_{ij}=\tr_L(e_ie_j)$ for the regular trace form on the group basis,
\begin{equation}\label{eq:traceform}
T=|G|\,P,\qquad \det T=\pm|G|^{|G|},
\end{equation}
where $P$ is the permutation matrix of $g\mapsto g^{-1}$. Hence over $\F_p$, $\det T\equiv 0\pmod p$
if and only if $p\mid|G|$.
\end{theorem}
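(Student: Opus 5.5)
The plan has two independent parts: the non-semisimplicity claim, which is the converse of Maschke's theorem, and the trace-form computation, which is a direct calculation in the group basis. The final ``if and only if'' then follows by reading the determinant modulo $p$.

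For non-semisimplicity, I would exhibit an explicit nonzero nilpotent two-sided ideal. Let $\sigma=\sum_{g\in G}g$. Since $h\sigma=\sigma h=\sigma$ for every $h\in G$, we have $a\sigma=\sigma a=\varepsilon(a)\sigma$, where $\varepsilon(a)=\sum_g a_g$ is the augmentation. Hence $\F\sigma$ is a two-sided ideal. Moreover $\sigma^2=|G|\sigma=0$ because $p\mid|G|$. A nonzero nilpotent ideal lies in the Jacobson radical, so $J(\FG)\neq 0$. Every finite product $\bigoplus_t M_{d_t}(\F)$ has zero radical, so no isomorphism of the form \eqref{eq:aw} can exist, and $\Phi$, being by definition a realization of such an isomorphism, is unavailable.

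For the trace form, I would first compute $\tr_L(g)$ for $g\in G$. Left multiplication by $g$ permutes the basis $G$, and it has a fixed point exactly when $gx=x$, that is, when $g=e$. So $\tr_L(g)=|G|$ if $g=e$ and $0$ otherwise. With $e_i=g_i$, this gives $T_{ij}=\tr_L(g_ig_j)=|G|$ when $g_j=g_i^{-1}$ and $0$ otherwise, so $T=|G|P$. Then $\det T=|G|^{|G|}\det P=\pm|G|^{|G|}$, with the sign equal to that of the involution $g\mapsto g^{-1}$, namely $(-1)^{k}$, where $k$ is the number of pairs $\{g,g^{-1}\}$ with $g\neq g^{-1}$. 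Reducing modulo $p$, $\det T\equiv 0$ if and only if $p\mid|G|^{|G|}$, which holds if and only if $p\mid|G|$ since $p$ is prime. The computation is over $\Z$, or equivalently in characteristic $0$, with the integer entries then reduced modulo $p$. Both directions of the equivalence therefore come from this one formula.

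I do not expect a real obstacle; the only delicate points are interpretive. First, the statement ``$\Phi$ is unavailable'' has to be read as ``no algebra isomorphism onto matrix blocks exists.'' The argument above shows exactly that, via radical invariance under isomorphism. Second, one should check that $\tr_L$ is computed over $\Z$ before reduction, so that the formula $\det T=\pm|G|^{|G|}$ is meaningful as an integer. It is also worth remarking that the criterion agrees with the classical fact that the trace form is nondegenerate exactly in the semisimple (separable) case.
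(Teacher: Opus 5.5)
Your proof is correct and follows the paper's argument. The paper likewise obtains $T=|G|P$ from the fixed-point count $\tr_L(g)=|G|\,[g=e]$. For non-semisimplicity it simply cites the converse of Maschke's theorem, which your nilpotent ideal $\F\sigma$ with $\sigma^2=|G|\sigma=0$ proves directly, and your explicit sign $(-1)^k$ for $\det P$ is a small sharpening.

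One caution, which applies only to your closing aside. Nondegeneracy of the regular trace form always implies semisimplicity, but the converse fails in positive characteristic. For example, $M_p(\F_p)$ is simple and separable, yet $\tr_L(x)=p\,\tr(x)=0$ identically. So the exact equivalence is special to group algebras, not a general classical fact.
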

\begin{proof}
Maschke's theorem is an equivalence, so when $p\mid|G|$ the radical is nonzero and $\FG$ admits no
decomposition into matrix algebras over $\F$, whence $\Phi$ does not exist. For the trace form,
$\tr L(g)=\#\{h:gh=h\}=|G|\,[g=e]$, so $T_{ij}=|G|\,[g_j=g_i^{-1}]$, i.e.\ $T=|G|P$ with $P$ an
involution and $\det P=\pm1$. Therefore $\det T=|G|^{|G|}\det P=\pm|G|^{|G|}$, and reducing modulo
$p$ gives $\det T\equiv0$ exactly when $p\mid|G|$.
\end{proof}

Equation~\eqref{eq:traceform} gives an exact witness of the boundary, computable directly from the
group basis. Figure~\ref{fig:boundary} shows the resulting semisimplicity grid for cyclic groups
across characteristics. The trace-form determinant vanishes precisely on the cells with $p\mid|G|$,
in agreement with the theorem on every cell. The same computation identifies the only regimes in
which a group-ring scheme can avoid the method: the modular regime $p\mid|G|$ and the large-block
regime of Section~\ref{sec:dichotomy}. The extension of the boundary to non-cyclic dihedral
platforms is reported in Supplementary~\textsection S8. We stress that the failure of the method on a
given platform is not a proof of security; it removes the present obstruction but says nothing about
others.

\begin{figure}[t]\centering
\includegraphics[width=0.52\linewidth]{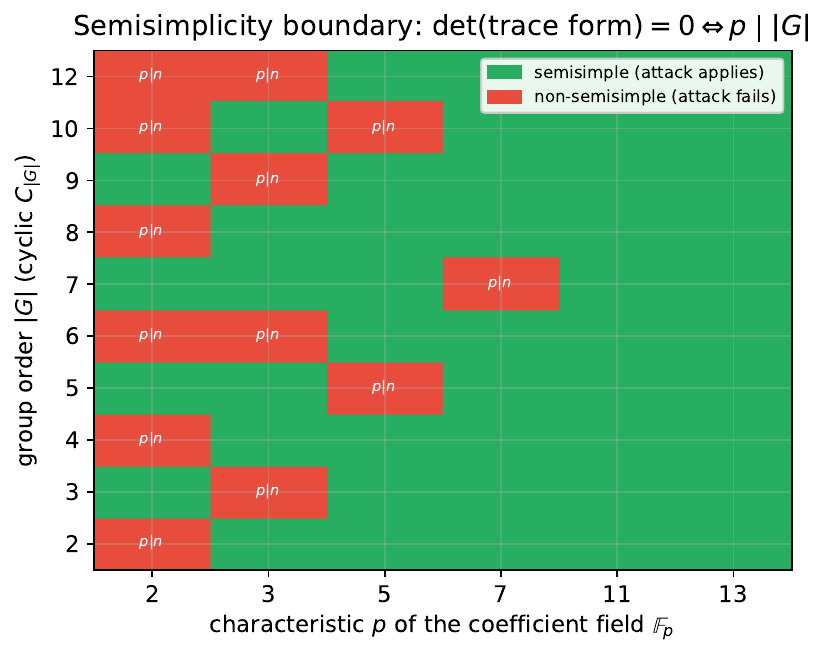}
\caption{Computed semisimplicity boundary for $\F_p[C_{|G|}]$. The trace-form determinant vanishes
precisely on the cells with $p\mid|G|$ (Theorem~\ref{thm:boundary}), where the Wedderburn
decomposition, and hence the method, does not exist.}
\label{fig:boundary}
\end{figure}

\section{A Complexity Dichotomy}\label{sec:dichotomy}

The results so far point to a single organizing principle: the cryptanalytic tractability of
group-ring unit inversion is governed by the largest irreducible dimension $d_{\max}$, together with
the efficiency of the transform, and \emph{not} by whether the group is abelian. Commutativity is the
wrong axis; block dimension is the right one. This section makes the principle precise as a
dichotomy, illustrated in Figure~\ref{fig:axis}.

\begin{figure}[t]\centering
\includegraphics[width=0.92\linewidth]{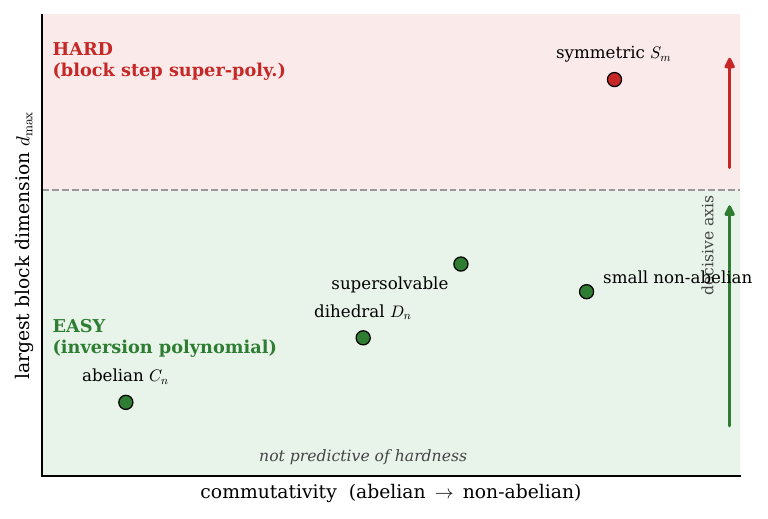}
\caption{Commutativity is not predictive of hardness; block dimension is. Every platform with
bounded $d_{\max}$ falls in the easy (green) band, regardless of where it sits on the
abelian--non-abelian axis; only super-polynomial $d_{\max}$ (the symmetric groups) reaches the hard
(red) band. The migration abelian~$\to$~dihedral moves horizontally and does not cross the
boundary that matters.}
\label{fig:axis}
\end{figure}

Taken together, the preceding sections mark out where unit inversion is easy. We proved the easy
direction above. Here we treat the hard direction, deferring the proof machinery (the output-size
separation and the \textsc{\#P}-hardness argument) to Supplementary~\textsection S5--S7 and keeping
the main statements in view.

\begin{theorem}[Easy direction]\label{thm:easy}
If $\FG$ (or a finite-dimensional algebra $A$) is semisimple, admits an efficient generalized
Fourier transform, and has $d_{\max}=\poly(\log|G|)$, then unit inversion is efficient: it is solved
in time polynomial in the input size by the reduction of Theorem~\ref{thm:reduction}, realized as
the circuit of Section~\ref{sec:coherent}.
\end{theorem}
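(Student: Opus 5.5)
The plan is to derive Theorem~\ref{thm:easy} as a direct assembly of Theorem~\ref{thm:gui}, Theorem~\ref{thm:reduction}, and, for the algebra version, Theorem~\ref{thm:algebra}. The hypotheses need some care. Semisimplicity gives the Artin--Wedderburn decomposition, but the paper's $\Phi$ in Eq.~\eqref{eq:aw} is defined over a splitting field. So the first step is to interpret ``admits an efficient generalized Fourier transform'' as supplying an explicit isomorphism $\Phi$ (or $\Psi$) onto $\bigoplus_t M_{d_t}(\F)$. If $\F$ is not splitting, we pass to a finite extension $\F_{q^k}$ that splits $G$. We take $k$ to be the order of $q$ modulo the exponent of $G$, which gives $k\le|G|$. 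The degree of this extension contributes only a factor $\poly(\log|G|)$ to the size of each field element, provided $k$ is bounded accordingly; this is the ``extension-field surrogate'' mentioned in Section~\ref{sec:scope}. Correctness then follows verbatim from Theorem~\ref{thm:gui}(i): $u$ is a unit iff every block is invertible, and $\Phi(u^{-1})$ is the tuple of block inverses. Because $u\in\FG$, the output $\Phi^{-1}$ of this tuple lies back in $\FG$ even when we computed over the extension.

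Next comes the cost accounting, using Eq.~\eqref{eq:cost}. The block stage is bounded by $\sum_t O(d_t^\omega)\le O(d_{\max}^{\omega-2}|G|)$ field operations. With $d_{\max}=\poly(\log|G|)$, this is $|G|\cdot\poly(\log|G|)$, which is polynomial in the input size $N=|G|\lceil\log_2 q\rceil$. The transform stages $T_\Phi$ and $T_{\Phi^{-1}}$ are polynomial by hypothesis. In the classical case we use the fast generalized DFT bound $\widetilde O(|G|d_{\max})$ from Theorem~\ref{thm:reduction}(i). In the quantum case the transform circuit has size $\poly(\log|G|)$ and acts on a register of width $O(|G|\log q)$, and each field operation costs $\poly(\log q)$ gates. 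Multiplying through gives a total of $\poly(N)$.

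For the quantum realization we cite Theorem~\ref{thm:gui}(iii): apply the QFT, then the Coherent Block Inversion circuit on disjoint sub-registers, then the inverse QFT. This gives $T$-count $O(|G|d_{\max}\,\mathrm{polylog}\,q)$ and width $O(|G|\log q)$, both polynomial in $N$. For the algebra version the same chain runs through Theorem~\ref{thm:algebra}, with $\dim A$ in place of $|G|$.

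The main obstacle is making ``efficient'' unambiguous. The input already has size $\Theta(|G|\log q)$, so the claim must be polynomial in $N$ and not in $\log|G|$. I would state this explicitly, as the proof of Theorem~\ref{thm:gui} does. I would also check that $\Phi^{-1}$ is efficient, and not just $\Phi$. I would obtain this from the fact that the quantum transform is a unitary or reversible circuit, whose inverse has the same size, and classically from the inverse-DFT formula $a_g=\frac{1}{|G|}\sum_t d_t\tr(\rho_t(g^{-1})\Phi_t(a))$. That formula is valid because $\mathrm{char}\,\F\nmid|G|$.
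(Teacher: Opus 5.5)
Your proposal is correct and follows the paper's own route: Theorem~\ref{thm:easy} is the assembly of the correctness statement Theorem~\ref{thm:gui}(i), the cost bound of Theorem~\ref{thm:reduction}, the leak-free block circuit of Lemmas~\ref{lem:cbi} and~\ref{lem:compose} (i.e.\ Theorem~\ref{thm:gui}(iii)), and Theorem~\ref{thm:algebra} for general algebras, with efficiency measured in the input size $|G|\lceil\log_2 q\rceil$ rather than in $\log|G|$. One small correction to your non-splitting detour: the degree $k=\mathrm{ord}_e(q)$, with $e$ the exponent of $G$, can be as large as $|G|-1$ (take $G=C_n$ with $n$ prime and $q$ a primitive root modulo $n$), so it is not a $\poly(\log|G|)$ factor in general; however, the bound $k\le|G|$ you already state keeps each $\F_{q^k}$-element at most $|G|\lceil\log_2 q\rceil$ bits and each $\F_{q^k}$-operation polynomial in the input size, so the caveat ``provided $k$ is bounded accordingly'' can simply be dropped.
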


The hard side has two independent sources, both visible in the cost model of
Figure~\ref{fig:dichotomy}: the largest block dimension $d_{\max}$ and the availability of an
efficient transform. The symmetric groups $S_m$ illustrate the first. There $d_{\max}$ equals the
largest number of standard Young tableaux of a partition of $m$ and grows super-polynomially
(Table~\ref{tab:dmax}), so the block-inversion step alone is super-polynomial; and for general groups
no efficient QFT is known. We give the hard direction concrete content through three results of
deliberately different logical strength.

The first is unconditional and concerns output size. A short unit can have a dense inverse, so no
algorithm can even write the answer quickly.

\begin{proposition}[Output-size separation]\label{prop:outputsize}
There are units $u\in\F_q[S_m]$ with $O(1)$ nonzero coordinates whose inverse $u^{-1}$ has support a
constant fraction of $|S_m|=m!$. Consequently, in the succinct-input model no algorithm, classical
or quantum, outputs $u^{-1}$ in the coordinate basis in time $\poly(\log|G|)$, because the output has
super-polynomial size. (Verified for $m=3,4,5$, where the inverse support fills the entire group;
see Supplementary~\textsection S5.)
\end{proposition}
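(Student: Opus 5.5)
We would construct the witnesses explicitly from a bounded-size generating set of $S_m$. Fix a transposition $s=(1\,2)$ and the $m$-cycle $t=(1\,2\,\cdots\,m)$; these generate $S_m$. Set $x=s+t\in\F_q[S_m]$ and, for a scalar $\lambda\in\F_q$, $u_\lambda=1+\lambda x$. Each $u_\lambda$ has at most three nonzero coordinates, which gives the $O(1)$-support requirement. The argument then has three steps: show $u_\lambda$ is a unit for most $\lambda$, show each coordinate of $u_\lambda^{-1}$ is a nonzero function of $\lambda$, and average over $\lambda$ to obtain one unit with dense inverse.

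\emph{Step 1 (unit).} We work in the semisimple case $\mathrm{char}\,\F_q\nmid m!$, passing to an extension field if necessary. By Theorem~\ref{thm:gui}(i), $u_\lambda$ is a unit iff $\det\rho(1+\lambda x)\neq0$ for every irreducible $\rho$. Each of these determinants is a polynomial in $\lambda$ with constant term $1$. Hence $N(\lambda)=\det L(u_\lambda)$, computed in the regular representation, is a nonzero polynomial of degree at most $m!$, and all but at most $m!$ values of $\lambda$ give units.

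\emph{Step 2 (each coordinate is generically nonzero).} By Cramer's rule in the regular representation, the coordinate of $g$ in $u_\lambda^{-1}$ has the form $c_g(\lambda)/N(\lambda)$, with $c_g$ a polynomial of degree at most $m!$. To show $c_g\not\equiv0$, expand $u_\lambda^{-1}=\sum_{k\ge0}(-\lambda)^kx^k$ as a formal power series at $\lambda=0$. The coefficient of $g$ in $x^k$ counts the words of length $k$ in $\{s,t\}$ that evaluate to $g$. Its first nonzero contribution therefore occurs at $k=\ell(g)$, the word length of $g$ with respect to $\{s,t\}$ (finite since $s,t$ generate), and equals $(-1)^{\ell(g)}$ times the number $N_g\ge1$ of geodesic words for $g$. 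Provided $p\nmid N_g$, the lowest-order Taylor coefficient of $c_g/N$ is nonzero, so $c_g\not\equiv0$. The cleanest way to guarantee $p\nmid N_g$ is to take $p$ larger than every $N_g$; there are at most $2^{\ell(g)}$ geodesic words, so this is a bounded condition. Alternatively, one can argue over $\mathbb{Q}$ first and then reduce modulo all but finitely many primes.

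\emph{Step 3 (averaging).} Choose $\lambda$ uniformly from $\F_q$ with $q\ge c\cdot m!$ for a suitable constant $c$, using an extension of the base field if needed, which the paper already permits as a splitting-field surrogate. By Step 2, each coordinate vanishes with probability at most $m!/q\le 1/c$, and $u_\lambda$ fails to be a unit with probability at most $1/c$. Linearity of expectation gives a unit $u_\lambda$ whose inverse has expected support at least $(1-2/c)\,m!$, so some choice of $\lambda$ attains a constant fraction. The consequence in the statement is then immediate: writing $u^{-1}$ in the coordinate basis requires $\Omega(m!)$ output symbols. This is super-polynomial in $\log|S_m|=\Theta(m\log m)$, whereas the succinct input has size $O(\log m!+\log q)$, so no classical or quantum algorithm can output $u^{-1}$ in $\poly(\log|G|)$ time.

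The main obstacle is Step 2 together with the field-size bookkeeping in Step 3. The counts $N_g$ could be divisible by a small characteristic $p$, and the union bound needs $q=\Omega(m!)$, which conflicts with a fixed small $\F_q$. If the statement is meant for fixed small $q$, I would first try a more explicit route. One idea is to pick $u=1-a\,w$ for a carefully chosen element $w$ of large order and analyse $(1-aw)^{-1}$ through a subgroup chain. Another is to verify directly, as the paper does for $m=3,4,5$, that the inverse is fully supported, and then try to lift this by induction along $S_{m-1}\le S_m$ using the branching rule. I expect that lifting to be the genuinely delicate part.
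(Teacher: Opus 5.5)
Your proof is correct, and it takes a different route from the paper's. The paper's stated justification is exact computation for $m=3,4,5$, where the inverse fills the whole group, with details in Supplementary~\textsection S5. You instead give an argument that works for every $m$. You use the expansion $(1+\lambda x)^{-1}=\sum_k(-\lambda)^k x^k$ to identify the lowest-order term of each Cramer numerator $c_g$ as $(-1)^{\ell(g)}N_g$. Hence $c_g\not\equiv 0$ whenever $p\nmid N_g$, and you finish by choosing $\lambda$ off the roots.

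What your route buys is a proof for all $m$ with an explicit three-term unit. It also sharpens to match the computed cases, with no averaging needed. The product $N(\lambda)\prod_g c_g(\lambda)$ is a nonzero polynomial of degree at most $(m!)^2$, and $c_g(0)=0$ for $g\neq e$. So any $q>(m!)^2$ gives a nonzero $\lambda$ for which $u_\lambda$ is a unit with \emph{fully} supported inverse.

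The price is that $q$ grows with $m$ and $p$ must not divide any $N_g$. This is harmless for the stated consequence. Each $\ell(g)$ is bounded by the diameter of the Cayley digraph, which is trivially $O(m^3)$, since each adjacent transposition $t^{j}st^{m-j}$ is a positive word of length $m+1$. So $p>2^{O(m^3)}$ suffices and $\log q=\poly(m)$. The input therefore stays succinct while the output has $\Omega(m!)$ entries. You should still state explicitly that $q$ may depend on $m$, because the constant-fraction claim over a fixed small field is not covered by your argument.

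Two minor points. First, Step~1 needs neither semisimplicity nor Theorem~\ref{thm:gui}, since $\det(I+\lambda L(x))$ has constant term $1$ over any field. Second, if over a fixed odd $q$ you only want the super-polynomial output bound, there is a simpler unit. Take $u=1+g$, with $g$ a product of disjoint cycles of distinct odd prime lengths, so $g$ has odd order $k=e^{\Omega(\sqrt{m\log m})}$. Its inverse $\tfrac12\sum_{i<k}(-1)^i g^i$ has support $k$, which is already super-polynomial in $\log m!$, though not a constant fraction of $m!$.
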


The second is unconditional within a natural oracle model that grants the algorithm the transformed
blocks for free, isolating the cost of inverting the largest block.

\begin{theorem}[Block-oracle quantum query lower bound]\label{thm:blockoracle}
In the block-oracle model, any quantum algorithm that outputs the inverse unit must invert the
largest block $\Phi_{t^*}(u)\in M_{d_{\max}}(\F_q)$, which requires $\Omega(d_{\max})$ quantum
queries (and $\Omega(d_{\max}^2)$ to write the inverse block). For $G=S_m$, where $d_{\max}$ grows
super-polynomially in $\log|G|$ (Table~\ref{tab:dmax}), the block stage is therefore
super-polynomial.
\end{theorem}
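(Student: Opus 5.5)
The plan is to first fix the block-oracle model precisely, and then reduce the statement to a known query lower bound for a linear-algebra task on one $d_{\max}\times d_{\max}$ matrix. In the model, the algorithm gets the other blocks $\Phi_t(u)$, $t\neq t^*$, for free (or their inverses; either choice only helps the algorithm). It accesses the entries of the largest block $M=\Phi_{t^*}(u)$ through a quantum oracle $O_M:|i,j,z\rangle\mapsto|i,j,z+M_{ij}\rangle$. Since $\Phi$ is an algebra isomorphism (Theorem~\ref{thm:gui}(i)), $u$ and the block tuple determine each other. By Theorem~\ref{thm:gui}(i), $u^{-1}$ determines $\Phi_{t^*}(u)^{-1}=M^{-1}$ via one free application of $\Phi$. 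So any algorithm that outputs $u^{-1}$ yields, with no extra queries to $O_M$, an algorithm outputting $M^{-1}$. This is the precise sense of ``must invert the largest block''.

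The output-writing bound $\Omega(d_{\max}^2)$ is immediate once we exhibit invertible $M$ with dense inverse. Take $M=I+J'$ with $J'$ a suitable rank-one perturbation, or simply note that a generic invertible matrix over $\F_q$ has all $d_{\max}^2$ entries of $M^{-1}$ nonzero with constant probability. The coordinate-basis output of that block then has $\Theta(d_{\max}^2)$ field elements.

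For the $\Omega(d_{\max})$ query bound I would embed an unstructured search or parity-type problem into the block. Given a hidden string $x\in\{0,1\}^{d}$ with $d=d_{\max}$, set $M_x=I+\sum_i x_i E_{i,1}$ for $i\ge 2$. This is unipotent and hence invertible, and $M_x^{-1}=I-\sum_i x_iE_{i,1}$. Recovering $M_x^{-1}$ therefore reveals all of $x$. A single query to $O_{M_x}$ at entry $(i,1)$ is one query to $x_i$, and other entries are constant, so an oracle for $M_x$ is simulated by one query to $x$. Learning all $d$ bits of an arbitrary $x$ needs $\Omega(d)$ quantum queries, by the standard bound that full recovery of an $n$-bit string needs $n/2$ queries (van Dam / Farhi et al., or the polynomial method via parity). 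This gives $\Omega(d_{\max})$. The main obstacle is definitional: I must make sure the block oracle does not leak $M$ through the free blocks or through $\Phi^{-1}$. I would handle this by having the free information depend only on $t\ne t^*$, so that it is independent of $x$. I must also check that a unit $u$ with $\Phi_{t^*}(u)=M_x$ and fixed other blocks exists, which holds by surjectivity of $\Phi$.

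Finally, for $G=S_m$ I would invoke Table~\ref{tab:dmax}. The largest number of standard Young tableaux satisfies $d_{\max}\ge\sqrt{m!/p(m)}$, since $\sum d_t^2=m!$ over $p(m)$ partitions. This is $e^{\Omega(m\log m)}$, whereas $\log|G|=\Theta(m\log m)$. Hence $d_{\max}$ is $2^{\Omega(\log|G|)}$, which is super-polynomial in $\log|G|$, and the block stage is super-polynomial.
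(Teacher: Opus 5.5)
Your proposal is correct, and it reaches the $\Omega(d_{\max})$ bound by a different route from the paper.

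\textbf{The paper's route.} The proof idea applies the polynomial method directly to the inverse block. Since $\Phi(u^{-1})_{t^*}=\Phi_{t^*}(u)^{-1}$ and every entry of a $d\times d$ inverse depends on all $d^2$ input entries, an $N'$-query algorithm yields degree-$2N'$ polynomials. This is taken to force $N'=\Omega(d)$ per output entry.

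\textbf{Your route.} You give a black-box reduction from string learning. You fix the other blocks independently of $x$; a unit exists by surjectivity of $\Phi$. The unipotent block $M_x=I+\sum_{i\ge2}x_iE_{i,1}$ has inverse $I-\sum_{i\ge2}x_iE_{i,1}$. So any inverter becomes an algorithm that recovers $x$, and hence its parity, which costs $\Omega(d)$ queries.

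\textbf{What each buys.} Your route buys rigor. Dependence on all variables does not by itself give linear degree: OR depends on all $n$ bits yet has query complexity $\Theta(\sqrt n)$. So the sketch in the main text needs an explicit hard embedding of exactly this kind. Your version also settles the definitional issues, namely no leakage through the free blocks and existence of $u$. It also strengthens at no cost to $\Omega(d_{\max}^2)$ queries. Fill the whole strict lower triangle with independent bits; since $M\mapsto M^{-1}$ is a bijection, outputting $M^{-1}$ amounts to learning $M$.

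The paper's per-entry formulation, once made precise, would buy hardness of computing even a single coordinate of the inverse block. That can be arranged in odd characteristic with a unipotent bidiagonal block whose $(1,d)$ inverse entry is $\pm\prod_i a_i$, with $a_i\in\{\pm1\}$. Your whole-output reduction does not give this, since each entry of $M_x^{-1}$ reveals only one bit. Your estimate $d_{\max}\ge\sqrt{m!/p(m)}=|G|^{1/2-o(1)}$ for $S_m$ is also a genuine asymptotic proof, where the paper only points to Table~\ref{tab:dmax}.

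\textbf{Two small slips.} Neither affects the argument.
\begin{enumerate}
\item For fixed $q$, a uniformly random invertible matrix has a fully dense inverse only with probability roughly $(1-1/q)^{d^2}$, not a constant. Use an explicit example instead, such as $M=I-S$ with $S$ the lower shift, whose inverse is the all-ones lower-triangular matrix. Alternatively, just note that the block output already has $d^2$ coordinates.
\item One query to $O_{M_x}$ costs $O(1)$ queries to $x$, not literally one: two, to compute $x_i$ into an ancilla and then uncompute it. This changes only constants.
\end{enumerate}
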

\begin{proof}[Proof idea; the full proof is given in Supplementary~\textsection S6.]
By Theorem~\ref{thm:gui}(i), $\Phi(u^{-1})_{t^*}=\Phi_{t^*}(u)^{-1}$. Each entry of a $d\times d$
inverse depends on all $d^2$ entries, and an $N'$-query quantum algorithm computes amplitudes that
are degree-$2N'$ polynomials in the oracle entries. The polynomial method then forces
$N'=\Omega(d)$ per output entry. Substituting $d=d_{\max}(S_m)$ gives the claim.
\end{proof}

The third is conditional and shows that the specific route we use, evaluating Wedderburn data, is
itself intractable for $S_m$ unless a standard complexity collapse occurs. The link is an exact
identity: the character-weighted coordinate that the Wedderburn map assigns to the irrep $\chi_t$ of
$S_m$, evaluated on the matrix of group-basis coefficients, is the immanant
$\mathrm{imm}_{\chi_t}(\cdot)=\sum_{\sigma\in S_m}\chi_t(\sigma)\prod_i a_{i,\sigma(i)}$. The
permanent ($\chi_t$ trivial) and the determinant ($\chi_t$ the sign) are the two extreme cases, and
the family interpolates between them as the partition varies. This is what makes the route hard for
large irreps, and it is a representation-theoretic obstruction, not an artifact of our algorithm.

\begin{theorem}[Conditional \textsc{\#P}-hardness of the Wedderburn route]\label{thm:sharpP}
The character-weighted coordinates that a Wedderburn-based inverter must evaluate for $\F[S_m]$
include the immanant $\mathrm{imm}_{\chi_t}(\cdot)$ as a special case. By B\"urgisser's
theorem~\cite{burgisser2000} the immanants for partitions of unbounded Durfee size are
\textsc{\#P}-hard (indeed VNP-complete). Hence, unless $\mathrm{FP}=\textsc{\#P}$, no polynomial-time
algorithm evaluates the Wedderburn data for $\F[S_m]$ across all irreps, and the
block-diagonalization route to inversion is \textsc{\#P}-hard in the large-block regime.
\end{theorem}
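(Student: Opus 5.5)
The plan is to reduce immanant evaluation to the evaluation of the character-weighted Wedderburn coordinates. That is, from an oracle returning the Wedderburn data of $\F[S_m]$ (or the character-weighted projections of an element), we should be able to compute $\mathrm{imm}_{\chi_t}(A)$ for an arbitrary $m\times m$ matrix $A$ with only polynomial overhead. The \textsc{\#P}-hardness of B\"urgisser's family then transfers to the route. We work over a field of characteristic $0$ or large $p\nmid m!$, so that $\F[S_m]$ is semisimple and split; every irrep of $S_m$ is defined over $\Q$.

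\textbf{Step 1: encode $A$ as an element.} Given $A=(a_{ij})$, form
\[
u_A=\sum_{\sigma\in S_m}\Big(\prod_i a_{i,\sigma(i)}\Big)\sigma\in\F[S_m].
\]
Apply the Wedderburn map and take the trace of the $t$-th block, $\tr\rho_t(u_A)=\sum_\sigma\chi_t(\sigma)\prod_i a_{i,\sigma(i)}=\mathrm{imm}_{\chi_t}(A)$. This is the exact identity quoted before the theorem. It is the ``character-weighted coordinate'', i.e.\ the trace of the $t$-th block, or equivalently $d_t/|G|$ times the coefficient of the central idempotent projection $e_t u_A$ at the identity.

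\textbf{Step 2: succinctness.} The input must be handed to the inverter in a form of size $\poly(m)$, otherwise the reduction is vacuous: $u_A$ has $m!$ coordinates. The cleanest choice is to note that $u_A$ is the image of $A^{\otimes m}$ under the Schur--Weyl map, $u_A=\langle e_1\otimes\cdots\otimes e_m|\,A^{\otimes m}\,|\text{permuted}\rangle$ summed appropriately. It is therefore given succinctly by $A$, and any inverter that evaluates Wedderburn data for succinctly presented elements in this model evaluates immanants. I would state the theorem as being about this succinct model, consistent with Proposition~\ref{prop:outputsize}.

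\textbf{Step 3: invoke B\"urgisser.} For a family of partitions $\lambda^{(m)}$ with Durfee size growing (e.g.\ hooks or rectangles as in \cite{burgisser2000}), $\mathrm{imm}_{\lambda^{(m)}}$ is \textsc{\#P}-hard, and the partition is computable in $\poly(m)$. Hence a polynomial-time Wedderburn evaluator across all irreps yields $\mathrm{FP}=\textsc{\#P}$. We must also handle the field: B\"urgisser's hardness is over $\Z$/$\Q$, and we transfer it to $\F_p$ for primes $p>m!$ by Chinese remaindering over polynomially many primes of polynomial bit-length. This bounds $|\mathrm{imm}|\le m!\,\max|\chi|\,\|A\|^m$.

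\textbf{Main obstacle.} The delicate step is Step 2 together with making precise what ``the Wedderburn data a Wedderburn-based inverter must evaluate'' means. An inverter need not output traces of blocks. I would define the route as any algorithm that computes $\Phi_t(u)$ (or its trace) for each $t$, and observe that the trace is a polynomial-time post-processing of $\Phi_t(u)$ when $d_t$ is written out. More carefully, if $d_t$ is super-polynomial, the trace must instead be obtained from a succinct representation of the block. I would therefore phrase the conclusion as hardness of the trace functional of the block, which suffices for the stated conditional claim.
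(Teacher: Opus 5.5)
Your proposal is correct and follows the paper's own argument: the identity $\tr\rho_t(u_A)=\mathrm{imm}_{\chi_t}(A)$ for $u_A=\sum_{\sigma}\prod_i a_{i,\sigma(i)}\,\sigma$, combined with B\"urgisser's theorem, and your insistence on a succinct input model is a precision the paper leaves implicit but genuinely needs, since with all $m!$ coordinates written out every block trace is computable in time polynomial in $m!$. What remains are slips: the identity coefficient of $e_tu_A$ is $(d_t/|G|)\,\mathrm{imm}_{\chi_t}(A)$, so your normalizing factor is inverted; hooks have Durfee size $1$, so rectangles are the example that matches the statement; and since only $p\nmid m!$ is needed, Chinese remaindering over primes just above $m$ suffices, which also spares you from having to generate primes $p>m!$ deterministically.
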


We emphasize what these three results do and do not establish. They show that both ingredients the
easy direction relies on, namely small blocks and an efficient transform, provably fail for $S_m$,
and they rule out the natural Wedderburn route. They do not rule out a fundamentally different
algorithm in the explicit-input model. Closing that gap is the open content of the following
conjecture, on which we never rely elsewhere.

\begin{conjecture}[Hard direction]\label{conj:hard}
For families of finite groups (or semisimple algebras) with $d_{\max}$ super-polynomial in
$\log|G|$, or for which no efficient generalized QFT exists, unit inversion admits no efficient
(quantum or classical) algorithm. Together with Theorem~\ref{thm:easy} this would give the
equivalence: unit inversion is efficient if and only if the algebra is semisimple with an efficiently
Fourier-transformable, polynomially bounded block structure.
\end{conjecture}

\begin{table}[t]\centering
\caption{Large-block evidence for the hard direction in $\F[S_m]$: the maximal irreducible dimension
$d_{\max}$ (block-oracle cost $\Omega(d_{\max})$) and its cube (block-inversion field operations).
All values are exact.}
\label{tab:dmax}
\begin{tabular}{cccc}\toprule
$m$ & $|S_m|=m!$ & $d_{\max}$ & block cost $\sim d_{\max}^3$ \\\midrule

3 & 6 & 2 & 8.00e+00 \\
4 & 24 & 3 & 2.70e+01 \\
5 & 120 & 6 & 2.16e+02 \\
6 & 720 & 16 & 4.10e+03 \\
7 & 5040 & 35 & 4.29e+04 \\
8 & 40320 & 90 & 7.29e+05 \\
9 & 362880 & 216 & 1.01e+07 \\
10 & 3628800 & 768 & 4.53e+08 \\
\bottomrule
\end{tabular}
\end{table}

\begin{figure}[t]\centering
\includegraphics[width=0.80\linewidth]{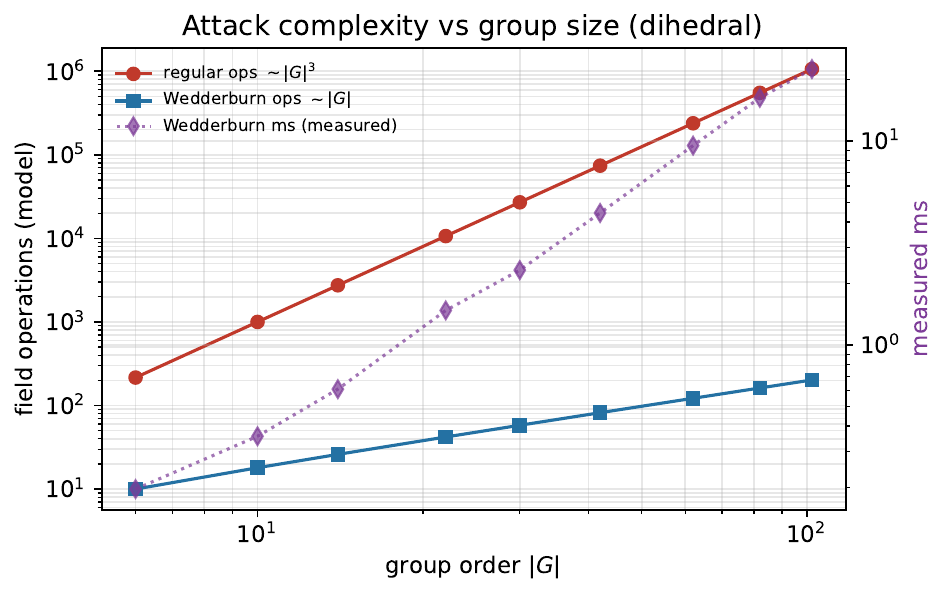}
\caption{The two faces of the dichotomy. (a) For dihedral groups all blocks have dimension at most
two, so Wedderburn inversion is near-linear while the generic method is cubic. (b) For symmetric
groups the maximal irreducible dimension grows super-polynomially. Tractability is governed by block
dimension, not by non-commutativity.}
\label{fig:dichotomy}
\end{figure}

\subsection{Proven, conditional, and open claims}\label{sec:proven}

Because the paper interleaves results of different logical status, we summarize that status
explicitly. The following are proved without conditions:

\begin{itemize}[leftmargin=1.4em,itemsep=1pt]
\item the algebraic inversion identity (Theorem~\ref{thm:gui}(i)--(ii));
\item the classical near-linear cost (Theorem~\ref{thm:reduction}(i));
\item the semisimple generalization (Theorem~\ref{thm:algebra});
\item the trace-form witness of the boundary (Theorem~\ref{thm:boundary});
\item the easy direction (Theorem~\ref{thm:easy});
\item the output-size separation (Proposition~\ref{prop:outputsize});
\item the correctness, unitarity, and clean-ancilla properties of the Coherent Block Inversion
circuit (Section~\ref{sec:coherent}, Supplementary~\textsection S1).
\end{itemize}

\noindent The block-oracle bound (Theorem~\ref{thm:blockoracle}) is
unconditional within its oracle model. Two claims are conditional: the \textsc{\#P}-hardness of the
Wedderburn route (Theorem~\ref{thm:sharpP}, under $\mathrm{FP}\ne\textsc{\#P}$) and the quantum
efficiency of Theorems~\ref{thm:gui}(iii) and~\ref{thm:reduction}(ii) (under an efficient
non-abelian QFT and bounded $d_{\max}$). All fault-tolerant figures in Section~\ref{sec:resources}
are model-based projections rather than measurements. Finally, the general unconditional lower bound
(Conjecture~\ref{conj:hard}), the hardness of the Radical Recovery assumption introduced in
Section~\ref{sec:surviving}, and on-hardware execution remain open. Every unconditional result in
the paper concerns the easy direction and the boundary.

For quick reference, Table~\ref{tab:status} restates this accounting in tabular form, separating
proven results from conditional ones and from open conjectures. Table~\ref{tab:assumptions} lists,
for the reader who wishes to audit the cryptanalytic conclusion, the four assumptions on which the
quantum break depends, where each is known to hold, and what follows if it fails. We regard these two
tables as the honest summary of how far the attack reaches: the cryptanalytic claim is firm exactly
where all four assumptions hold (the abelian, dihedral, and small non-abelian platforms), and is
correspondingly weaker elsewhere.

\begin{table}[t]\centering
\caption{Status of the paper's claims. ``Proven'' means established unconditionally (or
unconditionally within a stated oracle model); ``conditional'' means contingent on a named
assumption; ``open'' means conjectured or not yet established.}
\label{tab:status}
\footnotesize\setlength{\tabcolsep}{3pt}
\renewcommand{\arraystretch}{1.2}
\begin{tabular}{p{1.72cm}p{6.72cm}p{3.77cm}}
\toprule
Status & Claim & Location / qualifier \\
\midrule
Proven & Blockwise inversion identity $\Phi(u^{-1})=(\Phi_t(u)^{-1})_t$ over a splitting field & Thm.~\ref{thm:gui}(i)--(ii) \\
Proven & Classical near-linear cost $\widetilde{O}(d_{\max}^{\omega-2}|G|)$ & Thm.~\ref{thm:reduction}(i) \\
Proven & Generalization to semisimple algebras & Thm.~\ref{thm:algebra} \\
Proven & Exact trace-form witness of the semisimplicity boundary & Thm.~\ref{thm:boundary}, Eq.~\eqref{eq:traceform} \\
Proven & Easy direction of the dichotomy & Thm.~\ref{thm:easy} \\
Proven & Output-size separation for $S_m$ (succinct unit, dense inverse) & Prop.~\ref{prop:outputsize} \\
Proven & Correctness, unitarity, clean-ancilla behaviour of the block circuit & Lem.~\ref{lem:cbi}, Supp.~\textsection S1 \\
Proven (oracle model) & $\Omega(d_{\max})$ block-oracle query bound & Thm.~\ref{thm:blockoracle} \\
Conditional & Quantum efficiency of the attack & needs efficient non-abelian QFT and $d_{\max}=\poly(\log|G|)$; Thm.~\ref{thm:gui}(iii), \ref{thm:reduction}(ii) \\
Conditional & \textsc{\#P}-hardness of the Wedderburn route for $S_m$ & under $\mathrm{FP}\ne\textsc{\#P}$; Thm.~\ref{thm:sharpP} \\
Conditional & Fault-tolerant resource figures & model-based surface-code projections; \textsection\ref{sec:resources} \\
Open & General unconditional hardness of large-block / no-QFT platforms & Conj.~\ref{conj:hard} \\
Open & Hardness of the Radical Recovery assumption & Def.~\ref{def:rr}, Rem.~\ref{rem:rr} \\
Open & On-device (non-simulated) execution of the attack & \textsection\ref{sec:resources} \\
\bottomrule
\end{tabular}
\end{table}

\begin{table}[t]\centering
\caption{The four assumptions behind the quantum break, where each holds, and the consequence of its
failure. The cryptanalytic conclusion is firm only where all four hold at once.}
\label{tab:assumptions}
\footnotesize\setlength{\tabcolsep}{3pt}
\renewcommand{\arraystretch}{1.2}
\begin{tabular}{p{2.46cm}p{3.77cm}p{2.79cm}p{2.95cm}}
\toprule
Assumption & Holds for & Fails / unknown for & If it fails \\
\midrule
Efficient non-abelian QFT & abelian, dihedral, supersolvable groups \cite{beals1997,moore2006,dihedralgauge2024} & general $G$ (e.g.\ $S_m$) & no efficient quantum attack from our results \\
Splitting field, $\mathrm{char}\,\F\nmid|G|$ & $p\equiv1\bmod n$ for $\F_p[D_n]$ & non-splitting fields & blocks live over $\F_{p^k}$; base-field route still inverts (Supp.~\textsection S8) \\
Bounded block dimension $d_{\max}=\poly(\log|G|)$ & dihedral ($d_{\max}\le2$), small non-abelian & $S_m$ ($d_{\max}$ super-poly.) & block stage becomes super-polynomial (\textsection\ref{sec:dichotomy}) \\
Conjecture~\ref{conj:hard} (hard direction) & --- (conjectural) & --- & affects only claims about which platforms are \emph{secure}, not the attack \\
\bottomrule
\end{tabular}
\end{table}

\section{The Coherent Wedderburn Inversion Algorithm}\label{sec:coherent}

Theorem~\ref{thm:reduction} bounds the classical cost of inversion, but turning the reduction into a
genuine quantum algorithm requires one ingredient that is easy to overlook: the block inversions
must be performed coherently, without intermediate measurement and without leaving entangled ancilla
behind. We supply this as a self-contained reversible primitive, the Coherent Block Inversion
circuit, which is of independent interest wherever a finite-field matrix inverse must be applied
inside a larger quantum computation. The register layout, the five-step compute--copy--uncompute
construction, the proof of unitarity and correctness, the no-leakage argument, and the circuit
diagram are given in Supplementary~\textsection S1; here we record the resulting guarantees.

\begin{lemma}[Coherent Block Inversion]\label{lem:cbi}
Let $q$ be a prime power and $n_q=\lceil\log_2 q\rceil$. There is a circuit on $(3d^2+O(1))\,n_q$
qubits that preserves the input $M$, returns all scratch and ancilla to $|0\rangle$, and writes
$M^{-1}$ into a fresh output register whenever $M$ is invertible, flagging singular $M$ on a status
qubit. With the reversible modular-arithmetic circuits of~\cite{roetteler2017} its cost is
\[
\begin{aligned}
T(d,q)&=O\!\big(d^3 n_q^2+d\,n_q^3\big)\ T\text{-gates}=O(d^3\,\mathrm{polylog}\,q),\\
&\quad\text{width }(3d^2+O(1))\,n_q.
\end{aligned}
\]
\end{lemma}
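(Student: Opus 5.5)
The plan follows the standard compute--copy--uncompute (Bennett) pattern. We build a reversible circuit $C_{\mathrm{GJ}}$ that maps $|M\rangle|0\rangle|0\rangle$ to $|M\rangle|W(M)\rangle|\mathrm{junk}(M)\rangle$. Here $W(M)$ is a working register holding the augmented matrix $[M\mid I]$ after Gauss--Jordan elimination. When $M$ is invertible its right half is $M^{-1}$.

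The circuit proceeds in five steps.
\begin{enumerate}[label=(\arabic*),leftmargin=2.0em,itemsep=1pt]
\item Copy $M$ into the left half of the $2d^2 n_q$-qubit work register with CNOTs, and initialize the right half to $I$ with $X$ gates.
\item Run $C_{\mathrm{GJ}}$ for $d$ pivot rounds.
\item Copy the right half into a fresh $d^2 n_q$-qubit output register with CNOTs, controlled on the status qubit being ``invertible''.
\item Run $C_{\mathrm{GJ}}^{-1}$.
\item Undo step (1).
\end{enumerate}
Steps (4)--(5) exactly reverse (1)--(2) on every basis state $|M\rangle$, so all scratch returns to $|0\rangle$. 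The only surviving registers are the preserved input, the output $M^{-1}$ and the status bit. Since the map is a permutation of basis states, it is unitary. Because the scratch is clean on every branch, no entanglement with garbage remains when the circuit acts on superpositions of inputs.

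The central difficulty is making a single pivot round reversible while keeping the width at $(3d^2+O(1))n_q$ rather than $O(d^3)n_q$. Data-dependent pivot selection would otherwise leave a history register. Within round $k$ the plan is as follows.
\begin{enumerate}[label=(\alph*),leftmargin=2.0em,itemsep=1pt]
\item Compute, into $O(\log d)$ flag qubits, the index of the first row $i\ge k$ whose column-$k$ entry is nonzero.
\item Swap that row into position $k$ with controlled-SWAPs.
\item Compute the pivot inverse $a^{-1}$ in $O(1)$ field registers, using Fermat exponentiation $a^{q-2}$ or the reversible modular inverse of~\cite{roetteler2017}. This costs $O(n_q^3)$ $T$-gates and occurs once per round, giving the $d\,n_q^3$ term.
\item Scale row $k$ of both halves by $a^{-1}$ with reversible in-place multiplication. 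This is invertible because $a\neq0$.
\item For each $i\neq k$, perform the in-place update $R_i \leftarrow R_i - m_i R_k$, where $m_i$ is the current entry $(i,k)$.
\end{enumerate}

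The elimination update in (e) is the step I expect to be the main obstacle. Its multiplier is read from the very row being modified, so the update is not immediately reversible in place. I would handle it by first copying $m_i$ into an $O(1)$-size temporary and updating the remaining columns. Entry $(i,k)$ then becomes $0 = m_i - m_i\cdot 1$, which lets the temporary be erased by subtracting it from a known quantity. Failing that, I would record the multipliers of column $k$ in the (now zeroed) column $k$ of the left half, which costs no extra width, and clear them only in the uncompute pass. The pivot indices and inverses per round form $O(d\log d + d\,n_q)$ junk. This junk is tolerated inside $C_{\mathrm{GJ}}$ because step (4) removes it; I would argue it fits in the stated $O(1)$ slack per field register, or else absorb it into the bound.

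Singular $M$ is detected when no nonzero pivot exists in some round. That event sets the status qubit by a multi-controlled OR, and all later rounds and the step-(3) copy are controlled on its negation.

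Counting: each round performs $O(d^2)$ field multiplications across the $d\times 2d$ augmented matrix, at $O(n_q^2)$ $T$-gates each by~\cite{roetteler2017}, plus one inversion at $O(n_q^3)$. Over $d$ rounds, doubled by the uncompute pass, this gives $T(d,q)=O(d^3n_q^2+d\,n_q^3)$. The width is $d^2 n_q$ for the input, $2d^2 n_q$ for the work register, $d^2 n_q$ for the output and $O(n_q)$ ancilla. That is $4d^2n_q$ as stated, so to meet $(3d^2+O(1))n_q$ I would let the identity half of the work register double as the output: after step (3) swaps rather than copies into a right-half slot, only one spare $d\times d$ block is needed. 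Reconciling this counting, together with clean erasure of the pivot data, is where I expect the real care to be required.
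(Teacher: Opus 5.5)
Your five-step compute--copy--uncompute outline matches the construction behind the lemma, and your $T$-count $O(d^3n_q^2+d\,n_q^3)$ is fine. The gap is the width, which is part of the statement. Your layout uses four $d^2n_q$ blocks: the input, the augmented $d\times 2d$ work register, and a separate output.

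\textbf{Why the proposed repair fails.} Letting the right half ``double as the output'' does not work as written. Step (4) runs the full $C_{\mathrm{GJ}}^{-1}$, which also acts on the right half and drives it from $M^{-1}$ back to $I$. Swapping $M^{-1}$ into a fresh block instead leaves you with four blocks again.

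\textbf{The missing observation.} The right half is only ever a \emph{target}. The pivot search, the pivot inverse, the row-swap controls and the multipliers are all read from the left half and the pivot registers. So $C_{\mathrm{GJ}}$ is an autonomous circuit $U_L$ on the left half plus pivot data, interleaved with right-half gates controlled by $U_L$'s registers. Uncompute $U_L$ alone, then undo the copy of $M$. This cleans everything while leaving $M^{-1}$ in the right half, which becomes the output register. No copy in step (3) is needed, and the layout is input, left half and output, i.e.\ $3d^2n_q$.

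Alternatively, use in-place Gauss--Jordan with the sweep $a_{kk}\mapsto a_{kk}^{-1}$, $a_{kj}\mapsto a_{kj}a_{kk}^{-1}$, $a_{ik}\mapsto -a_{ik}a_{kk}^{-1}$, $a_{ij}\mapsto a_{ij}-a_{ik}a_{kj}a_{kk}^{-1}$. Each sweep is a bijection computable with $O(1)$ field-register ancillae. A single $d\times d$ work block then ends holding $M^{-1}$, up to a column permutation fixed by the pivot record, and you copy out and uncompute.

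\textbf{The junk does not fit.} You cannot ``absorb it into the bound'': the bound allows $O(1)$ extra field registers in total, not slack per register.
\begin{itemize}
\item \emph{Pivot inverses.} Normalizing the pivot entry to $1$ destroys $a$, so the register holding $a^{-1}$ cannot be cleared within the round, and you accumulate $d\,n_q$ junk qubits. Instead, leave $a$ at $(k,k)$, just as your fallback leaves $m_i$ in column $k$. Compute $a^{-1}$ into one ancilla, scale the rest of row $k$, and uncompute $a^{-1}$ from the $a$ still present. Use the Roetteler et al.\ inverse here. A naive reversible Fermat chain must keep its intermediate powers, since squaring is not injective on $\mathbb{F}_q^\times$ for odd $q$.
\item \emph{First option in (e).} This cannot work. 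Once entry $(i,k)$ is zeroed, $m_i$ survives only in the temporary, and no known quantity remains to clear it against. Only your fallback is viable. It forces later rounds to skip already-processed left-half columns, which now hold multipliers rather than zeros.
\item \emph{Pivot indices.} The remaining $O(d\log d)$ pivot-index bits are lower order than $d^2n_q$ but not $O(n_q)$ in general, so list them explicitly.
\end{itemize}

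\textbf{The status qubit.} A status qubit set inside $C_{\mathrm{GJ}}$ is erased by step (4), so it must be copied to a fresh qubit beforehand. The running update $f\leftarrow f\lor z_k$ is also not reversible in place, since $(0,1)$ and $(1,1)$ collide. Keep a per-round no-pivot indicator as junk, for instance a sentinel value of each pivot-index register. OR these indicators into the fresh status qubit before uncomputing.
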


\begin{lemma}[Leak-free composition]\label{lem:compose}
Applying the generalized QFT $\Phi$ to the coordinate register, then the circuit of
Lemma~\ref{lem:cbi} to each Wedderburn block on its own disjoint sub-register, and finally
$\Phi^{-1}$, yields a unitary mapping the coordinates of a unit $u$ to those of $u^{-1}$. Because
each block circuit returns its scratch to $|0\rangle$, no ancilla is shared or entangled across
blocks or with the transforms, and the blocks execute in parallel.
\end{lemma}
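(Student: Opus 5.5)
The proof of Lemma~\ref{lem:compose} is a composition argument resting on Lemma~\ref{lem:cbi} and Theorem~\ref{thm:gui}(i). I would first fix the register layout. The coordinate register holds $u=(a_g)_{g\in G}$ in $|G|$ sub-registers of $n_q$ qubits each. The generalized QFT $\Phi$, realized as a reversible $\F_q$-linear unitary, maps the basis state $|u\rangle$ to $|\Phi_1(u),\dots,\Phi_h(u)\rangle$. Because $\sum_t d_t^2=|G|$, the output register partitions exactly into disjoint sub-registers $B_t$, one of $d_t^2 n_q$ qubits per block. Attached to each $B_t$ I would place its own output register $O_t$, scratch register $S_t$ and status qubit $f_t$, all initialized to $|0\rangle$ and pairwise disjoint across $t$.

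Next I would analyze the block stage. Call $U_t$ the circuit of Lemma~\ref{lem:cbi} acting on $B_t\otimes O_t\otimes S_t\otimes f_t$. Since these registers are disjoint, the operators $U_1,\dots,U_h$ act on tensor factors and commute, so $\bigotimes_t U_t$ is a unitary that may be scheduled in parallel. On a basis input coming from a unit $u$, Theorem~\ref{thm:gui}(i) says every $\Phi_t(u)$ is invertible. Lemma~\ref{lem:cbi} then gives
\[
U_t|\Phi_t(u)\rangle|0\rangle|0\rangle|0\rangle=|\Phi_t(u)\rangle|\Phi_t(u)^{-1}\rangle|0\rangle|0\rangle .
\]
The state after the block stage is therefore the product state
\[
\textstyle\bigotimes_t |\Phi_t(u)\rangle|\Phi_t(u)^{-1}\rangle|0\rangle|0\rangle .
\]
Every scratch and status register factors out as $|0\rangle$, which is the no-entanglement claim on basis states. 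Linearity extends it to superpositions of unit inputs: the ancilla factor is the fixed state $|0\rangle$ independent of $u$, so it separates from the data.

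For the final stage, the output registers $(O_t)_t$ together hold $\Phi(u^{-1})$ by Theorem~\ref{thm:gui}(i). Applying $\Phi^{-1}$ to them produces $|u^{-1}\rangle$. To land on a clean map $|u\rangle|0\rangle\mapsto|u\rangle|u^{-1}\rangle$, I would also apply $\Phi^{-1}$ to the input blocks $(B_t)_t$, which restores $|u\rangle$. The composite is a product of unitaries, so it is unitary and acts as stated on unit inputs.

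The main obstacle I expect is the interface between the QFT and the block circuits. The standard generalized QFT is a unitary over $\mathbb{C}$ with normalization factors $\sqrt{d_t/|G|}$, whereas Lemma~\ref{lem:cbi} needs the blocks as $\F_q$-valued basis states. I would resolve this by reading $\Phi$ here, as in the proof of Theorem~\ref{thm:gui}(iii), as the reversible $\F_q$-linear permutation of basis states. It is invertible because $\mathrm{char}\,\F\nmid|G|$ makes Eq.~\eqref{eq:aw} a bijection, and it therefore has a reversible classical circuit. Under that reading, basis states map to basis states, and the argument above is exact.
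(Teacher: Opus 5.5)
Your proposal is correct and follows essentially the same composition argument the paper gives (in the lemma's own justification and the proof sketch of Theorem~\ref{thm:gui}(iii)): you read $\Phi$ as a reversible $\F_q$-linear map, apply Lemma~\ref{lem:cbi} on pairwise disjoint sub-registers so the block unitaries form a tensor product that runs in parallel, get correctness from Theorem~\ref{thm:gui}(i), and use clean scratch to leave a product state before $\Phi^{-1}$. Your extra step of also applying $\Phi^{-1}$ to the input blocks, which gives $|u\rangle|0\rangle\mapsto|u\rangle|u^{-1}\rangle$, is a harmless clarification of the out-of-place map the lemma describes.
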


The composition of Lemma~\ref{lem:compose} is the Coherent Wedderburn Inversion algorithm, shown
end to end in Figure~\ref{fig:pipeline}. Its total
logical $T$-count is the two transforms plus $\sum_t T(d_t,q)=O(|G|\,d_{\max}\,\mathrm{polylog}\,q)$,
and its width is $O(|G|\log q)$. We validated the primitive with a register-level reversible
simulator that executes the construction on explicit $\F_q$ registers, logs every elementary
operation, and replays the inverse log to confirm scratch restoration. Across all tested dimensions
$d\le 6$ and fields $q\in\{29,257\}$, with $40$ random invertible matrices each, the simulator
confirmed without exception that the output equals $M^{-1}$, the input is preserved, and every
scratch and ancilla register returns exactly to $|0\rangle$. Representative resource counts and the
closed-form $T$-count are collected in Supplementary~\textsection S9.

\begin{figure}[t]\centering
\includegraphics[width=0.92\linewidth]{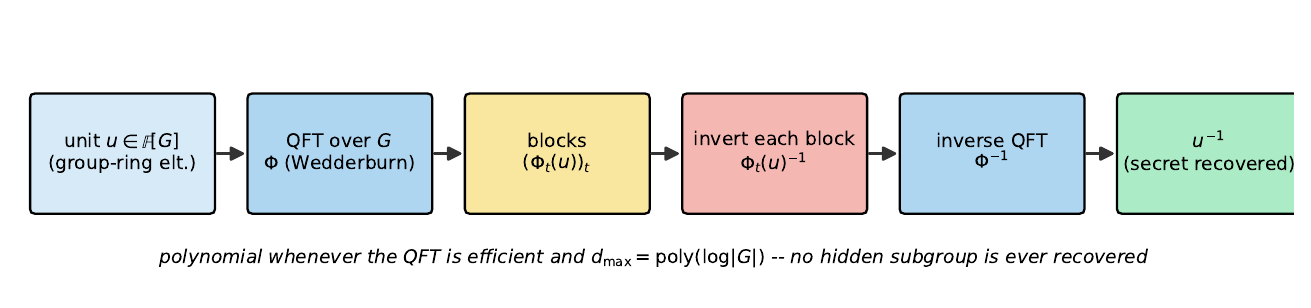}
\caption{The complete attack pipeline. A unit is mapped to block-diagonal form by the generalized
Fourier transform, each block is inverted independently by the Coherent Block Inversion circuit, and
the inverse transform returns the coordinates of $u^{-1}$. No coset state is formed and no hidden
subgroup is recovered at any stage.}
\label{fig:pipeline}
\end{figure}

\section{End-to-End Key Recovery}\label{sec:e2e}

To confirm that the $\mathrm{UINV}$ result translates into a concrete attack, we carried it through
to plaintext recovery against a representative group-ring public-key scheme in the
Hurley--Hurley / Mittal--Kumar trapdoor style. The private key is a unit $u\in\F_p[D_n]$, the public
key is the group-ring matrix $M=L(u)$, encryption is $c=u\cdot m$, and decryption is
$m=u^{-1}\cdot c$. Given only the public key, the attacker reads $u=Me_0$ from the first column,
applies the inversion of Section~\ref{sec:main} to recover $u^{-1}$, and decrypts.
Table~\ref{tab:keyrecovery} records the outcome: in every tested instance the recovered private key
matched the true key exactly, and all $500$ randomly generated ciphertexts per instance decrypted
correctly. This is a full public-key-to-plaintext recovery rather than a mere $\mathrm{UINV}$ oracle.

\begin{table}[t]\centering
\caption{End-to-end key recovery and decryption against the representative scheme. Since the schemes
of~\cite{hurley2011,mittalkumar2021} fix no standardized parameter sets and provide no formal
reduction, this instantiation is representative; the precise claim is that the method defeats the
intended hardness assumption.}
\label{tab:keyrecovery}
\begin{tabular}{ccccc}\toprule
Scheme instance & order & private key recovered & messages decrypted & rate \\\midrule

$\mathbb{F}_{7}[D_{3}]$ & 6 & yes & 500/500 & 1.000 \\
$\mathbb{F}_{11}[D_{5}]$ & 10 & yes & 500/500 & 1.000 \\
$\mathbb{F}_{29}[D_{7}]$ & 14 & yes & 500/500 & 1.000 \\
$\mathbb{F}_{23}[D_{11}]$ & 22 & yes & 500/500 & 1.000 \\
\bottomrule
\end{tabular}
\end{table}

Validation over genuinely non-splitting coefficient fields (where $p\not\equiv1\bmod n$ and the
two-dimensional blocks live over an extension $\F_{p^k}$) and over non-cyclic boundaries is reported
in Supplementary~\textsection S8. In all such cases the base-field regular-representation route,
which never invokes splitting, inverts every unit exactly, confirming that the splitting hypothesis
is a convenience for presenting the blocks rather than a precondition of the attack.

\section{Practical Cryptographic Impact}\label{sec:impact}

The preceding sections are algorithmic; this one asks what they mean for a practitioner deciding
whether a group-ring construction is safe to deploy against a future quantum adversary. We answer at
the level of the published proposals and their stated parameters, and we are careful to separate
``the trapdoor is inverted'' from ``the scheme is broken,'' since the two coincide only when the
scheme's security actually rests on unit inversion.

\paragraph{Comparison against published proposals and their parameters} Two lines of work are
usually cited as the basis for group-ring public-key cryptography. The first is the constructions of
Hurley and Hurley~\cite{hurley2011} (``group ring cryptography''), whose hard problem is precisely
the inverse computation problem, our $\mathrm{UINV}$. The second is the ElGamal-type group-ring
cryptosystems of Mittal et al.~\cite{mittalkumar2021} (``group rings based public key
cryptosystems''), together with the identity-based variant~\cite{idbased2022}. A concrete recent
instantiation with explicit parameter sets is the twisted dihedral-algebra key-encapsulation
mechanism of de la Cruz and Villanueva-Polanco~\cite{delacruz2021}, which proposes
$(p,n)\in\{(19,19),(23,23),(31,31),(41,41)\}$ targeting $128$/$192$/$256$-bit keys. None of the
first group fix standardized parameters; their papers give small illustrative examples and argue
security informally. Table~\ref{tab:realparams} therefore compares them by \emph{platform and
underlying hard problem}, with an explicit applicability column. The pattern is consistent. Where
security is defined by inverting a unit, our algorithm applies under the assumptions of
Table~\ref{tab:assumptions}. Where security instead rests on a discrete-logarithm or a decoding /
decisional assumption layered on top of the group ring, unit inversion is needed for legitimate
decryption but is not by itself the security claim, and our result is then a partial, not a total,
statement.

\begin{table}[t]\centering
\caption{Published group-ring proposals, their platforms and stated parameters, the problem their
security rests on, and the applicability of our result. ``ICP'' is the inverse computation problem
($\mathrm{UINV}$); ``DLPGR'' is the discrete logarithm in a group ring. Where no standardized
parameters are fixed, we note the platforms the source uses. Applicability is stated under the
assumptions of Table~\ref{tab:assumptions}; ``partial'' marks schemes whose stated security is not
unit inversion.}
\label{tab:realparams}
\footnotesize\setlength{\tabcolsep}{3pt}
\renewcommand{\arraystretch}{1.2}
\begin{tabular}{p{2.87cm}p{2.46cm}p{2.46cm}p{1.97cm}p{2.46cm}}
\toprule
Proposal & Platform / parameters & Security rests on & Semisimple, bdd.\ irr.? & Applicability of our result \\
\midrule
Hurley--Hurley, group ring cryptography~\cite{hurley2011} & $RG$, $R\in\{\Z,\Z_q,\mathrm{GF}(q)\}$; illustrative small $G$ & ICP ($=\mathrm{UINV}$), optionally $+$DLPGR & yes when $\mathrm{char}\nmid|G|$ and bdd.\ irr. & applies to the ICP layer; if DLPGR is added, that layer needs Shor-type analysis \\
Mittal et al., group rings based PKC~\cite{mittalkumar2021} & ElGamal / elliptic-ElGamal over $RG$; illustrative examples & DLP / ECDLP in $RG$ & yes & partial: legitimate decryption uses $u^{-1}$, but the security claim is DLP-type \\
ID-based group-ring encryption~\cite{idbased2022} & small non-abelian $G$ & unit-based trapdoor & yes & applies under the model assumptions \\
Twisted dihedral-algebra KEM~\cite{delacruz2021} & $\F_p[D_n]$ twisted, $(p,n)\!\in\!\{(19,19),\dots,(41,41)\}$, $128/192/256$-bit & decisional decoding in the twisted algebra & semisimple, $d_{\max}\!\le\!2$ & platform inversion is efficient (Cor.~\ref{cor:dihedral}); KEM security rests on a separate decisional assumption, \emph{not} directly broken \\
\bottomrule
\end{tabular}
\end{table}

\paragraph{What this does and does not imply for deployment} For schemes whose trapdoor is unit
inversion over a semisimple, bounded-dimension platform, the practical consequence is direct. The
private key is recoverable from public data by a quantum adversary under the stated assumptions. Such
schemes therefore provide no post-quantum confidentiality, and data encrypted under them today is
exposed to a harvest-now-decrypt-later strategy. The migration that the literature recommends, from
abelian to dihedral or other small non-abelian platforms, does not address this, because the
obstruction is the existence of an efficient block-diagonalizing transform, which dihedral groups
possess. By contrast, some proposals rest on a discrete-logarithm or a decisional/decoding assumption
(the ElGamal-type and the twisted-algebra KEM rows of Table~\ref{tab:realparams}). For those, our
result speaks only to the ease of inversion, not to the advertised hardness; breaking them would
require attacking the actual assumption, which we do not claim to do. Figure~\ref{fig:decision}
condenses this into a decision procedure a designer can apply to a candidate platform.

\begin{figure}[t]\centering
\includegraphics[width=\linewidth]{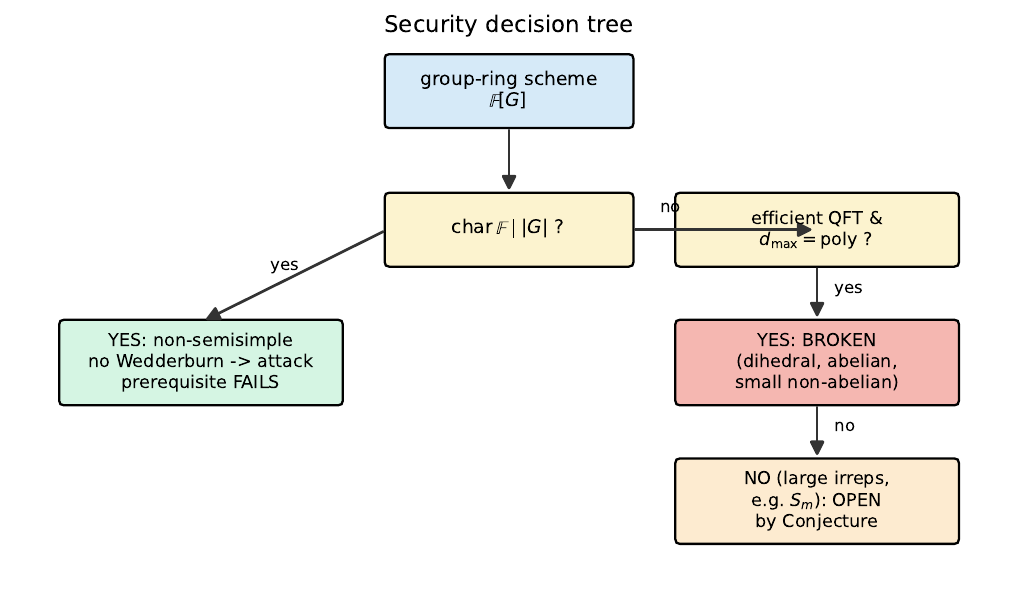}
\caption{A decision procedure for a candidate group-ring platform. Efficient inversion follows when
the algebra is semisimple, has an efficient generalized transform, and has bounded block dimension.
The modular and large-block branches are not reached by our attack, but, per
Section~\ref{sec:surviving}, this is not in itself a proof of security.}
\label{fig:decision}
\end{figure}

\paragraph{Actionable guidance} Three recommendations follow, each qualified by the assumptions
above. First, a group-ring scheme whose security is the hardness of inverting a unit should not be
considered post-quantum on a semisimple, bounded-dimension platform, dihedral included. Second,
choosing a platform for ``non-abelianness'' alone is not protective; what matters is the block
structure and the transform, not commutativity. Third, the only regimes our attack does not reach are
the modular ($\mathrm{char}\,\F\mid|G|$) and large-block regimes, and, as
Sections~\ref{sec:dichotomy} and~\ref{sec:surviving} make clear, neither is established as secure;
they are directions for further study, not safe defaults.

\section{A Construction in the Surviving Regime}\label{sec:surviving}

The dichotomy leaves two regimes that our attack does not reach: super-polynomial representation
dimension and the modular regime $\mathrm{char}\,\F\mid|G|$. We stress that ``not reached by our attack'' is not the same as ``secure.'' Both regimes are
candidate safe harbours only, and the construction below should be read as a proposal for
cryptanalytic study rather than as a scheme we claim to be secure. We treat the modular regime as a design space, and we begin by clearing up a
natural but incorrect inference, namely that non-semisimplicity already implies hardness.

\begin{proposition}[Local modular rings invert easily]\label{prop:localeasy}
For a prime $p$, the ring $\F_p[C_p]$ is non-semisimple (its Jacobson radical $J=(\sigma-1)$ is
nonzero, with $J^p=0$), yet every unit is invertible in $O(\log p)$ Newton iterations, each costing
$O(p)$ field operations.
\end{proposition}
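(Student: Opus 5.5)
We would work with the explicit model $\F_p[C_p]\cong\F_p[x]/(x^p-1)$, where $\sigma\mapsto x$. In characteristic $p$ the Frobenius identity gives $x^p-1=(x-1)^p$. Setting $y=x-1$ we obtain $\F_p[C_p]\cong\F_p[y]/(y^p)$, a local ring with maximal ideal $(y)=(\sigma-1)$.

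\textbf{Step 1: the radical.} The ideal $(y)$ is nilpotent, with $y^p=0$ and $y^{p-1}\neq0$. Hence $J\supseteq(y)$ and $J\neq0$. Because the quotient by $(y)$ is the field $\F_p$, we get $J=(\sigma-1)$ exactly. This also recovers non-semisimplicity, consistent with Theorem~\ref{thm:boundary}.

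\textbf{Step 2: description of the units.} An element $u=a_0+a_1y+\dots+a_{p-1}y^{p-1}$ is a unit iff $a_0\neq0$, i.e.\ iff its augmentation $\varepsilon(u)=\sum_g u_g$ is nonzero. Indeed, if $a_0\neq0$ then $u=a_0(1-z)$ with $z$ nilpotent, and $(1-z)^{-1}=\sum_{k<p}z^k$. Conversely, if $a_0=0$ then $u$ is nilpotent and so not a unit.

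\textbf{Step 3: the algorithm.} We first compute $v_0=a_0^{-1}\in\F_p$, which takes one field inversion and satisfies $uv_0\equiv1\pmod{y}$. We then run the Newton--Hensel iteration $v_{k+1}=v_k(2-uv_k)$. Writing $e_k=1-uv_k$, a one-line computation gives $e_{k+1}=e_k^2$, so $e_k\in(y^{2^k})$. After $\lceil\log_2 p\rceil$ iterations we have $e_k\in(y^p)=0$, so $v_k=u^{-1}$ exactly. This argument uses only commutativity and the nilpotence of $(y)$, and it is the same quadratic lifting used for power series inversion.

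\textbf{Step 4: the cost.} Each iteration consists of two multiplications in $\F_p[C_p]$ plus a subtraction. Multiplication in the group ring is cyclic convolution of length $p$. The main obstacle is reconciling this with the claimed $O(p)$ field operations per iteration. Naive convolution costs $O(p^2)$. Fast convolution via an FFT over an extension field containing suitable roots of unity costs $O(p\log p)$, or $O(p\log p\log\log p)$ via Cantor--Kaltofen, and in characteristic $p$ no $p$-th roots of unity exist. We therefore plan to exploit truncation in the $y$-basis. At iteration $k$ only the first $2^{k+1}$ coefficients matter, so by working modulo $y^{2^{k+1}}$ the total cost telescopes to a constant times the cost of one full multiplication. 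This is the standard Newton-iteration cost bound.

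Converting between the $\sigma$-basis and the $y$-basis is a binomial (Pascal) transform, which costs $O(p\log p)$ with fast Taylor shift or $O(p^2)$ naively. If a strict $O(p)$ per iteration turns out to be unattainable, we would restate the bound as $\widetilde O(p)$ per iteration, i.e.\ $M(p)$, which is what the Newton argument genuinely supports. Either version suffices for the intended point that the cost is polynomial, indeed quasi-linear, in the input size $p\lceil\log_2 p\rceil$, so that non-semisimplicity by itself confers no hardness.
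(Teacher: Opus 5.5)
Your Steps 1--3 are correct and follow the paper's argument. The paper also identifies $\F_p[C_p]\cong\F_p[x]/((x-1)^p)$ as a local ring with maximal ideal $(x-1)$, starts from the scalar part of $u$, and runs $v\leftarrow v(2e-uv)$, concluding convergence in $\lceil\log_2 p\rceil$ steps because $e-uv$ lies in the nilpotent radical. Your recursion $e_{k+1}=e_k^2$, $e_k\in(y^{2^k})$, supplies the detail the paper leaves implicit.

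The proposal falls short of the statement only in the per-iteration bound of $O(p)$ field operations, and your suspicion there is justified. The paper's proof does not argue this bound at all, and it does not follow from the Newton argument. In the paper's formulation each iteration performs two full products in $\F_p[C_p]$, that is, two cyclic convolutions of length $p$. These cost $O(p^2)$ schoolbook, or $O(M(p))=O(p\log p\log\log p)$ with fast methods such as Cantor--Kaltofen, and no $O(p)$-operation method for such products is known.

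Your precision-doubling idea does not rescue the literal claim either. It lowers the \emph{total} cost to $O(M(p))$, against $O(M(p)\log p)$ for the untruncated iteration in the group basis. But the last iteration is still a product at precision about $p$, so individual iterations are not $O(p)$.

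Rather than leaving the bound conditional, commit to what the argument actually proves:
\begin{itemize}
\item $\lceil\log_2 p\rceil$ iterations, each costing $O(M(p))$; or
\item $O(M(p))$ in total, using truncation plus an $O(M(p))$ Taylor-shift change to the $y$-basis. The Taylor shift is valid here because every $k!$ with $k\le p-1$ is invertible mod $p$.
\end{itemize}
This corrected bound is quasi-linear in the input size. That is all the proposition's real point needs: non-semisimplicity by itself confers no hardness.
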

\begin{proof}
In characteristic $p$ one has $\F_p[C_p]\cong\F_p[x]/((x-1)^p)$, a local ring with maximal ideal
$(x-1)$. Writing $u=a+\nu$ with $\nu\in J$ nilpotent, the Newton iteration $v\leftarrow v(2e-uv)$
converges quadratically in $\lceil\log_2 p\rceil$ steps because $e-uv$ lies in the nilpotent radical.
The behaviour is confirmed for $p\in\{3,5,7,11\}$ in Supplementary~\textsection S8.
\end{proof}

The lesson is that the modular escape of Section~\ref{sec:boundary} only removes the Wedderburn
route. On its own it does not manufacture a hard inversion problem, and a secure modular construction
must therefore place its hardness somewhere other than the unit group of a local ring. With this in
mind, take $p$ prime, $\ell$ coprime to $p$, and $R=\F_p[C_p\times C_\ell]$. Then
$R/J\cong\F_p[C_\ell]$ is semisimple and falls to the present method, while the radical
$J=(\sigma-1)R$ has $\F_p$-dimension $(p-1)\ell$. The idea is to sacrifice the semisimple quotient
deliberately and concentrate the secret in the radical.

\begin{definition}[Radical Recovery, RR]\label{def:rr}
Given the semisimple image $\pi(u)\in R/J$ of a secret unit $u$, together with $t<\dim_{\F_p}J$
public linear probes $A\mathbf{r}$ of its radical coordinate vector $\mathbf{r}\in\F_p^{\dim J}$
(with $A\in\F_p^{t\times\dim J}$ public), recover $\mathbf{r}$.
\end{definition}

\begin{theorem}[One-wayness reduces to RR]\label{thm:rr}
Define $f(u)=(\pi(u),A\mathbf{r})$ on units of $R$ with secret radical part $\mathbf{r}$. Any
algorithm inverting $f$ solves RR, and conversely. Moreover the Coherent Wedderburn Inversion
method, applied to $f(u)$, recovers only $\pi(u)$ and yields no information about $\mathbf{r}$ beyond
the public probes: when $t<\dim_{\F_p}J$ the consistent secrets form an affine space of size
$p^{\dim J-\mathrm{rank}\,A}>1$.
\end{theorem}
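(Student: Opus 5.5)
The plan is to fix coordinates on $R=\F_p[C_p\times C_\ell]$ and then check the three claims one at a time. Write $\sigma$ for a generator of $C_p$. Then $R\cong\F_p[C_\ell][\sigma]/((\sigma-1)^p)$, so every $u\in R$ has a unique expansion
\[
u=\sum_{k=0}^{p-1}c_k(\sigma-1)^k,\qquad c_k\in\F_p[C_\ell].
\]
In these coordinates $J=(\sigma-1)R$ is spanned by the terms with $k\ge1$, and $\pi(u)=c_0$. We choose an $\F_p$-basis of $J$ adapted to this expansion. This makes the radical coordinate vector $\mathbf r\in\F_p^{(p-1)\ell}$ well defined, and it gives the $\F_p$-linear splitting
\[
R=S\oplus J,\qquad S=\F_p[C_\ell]\cdot 1.
\]
Under this splitting $u\leftrightarrow(\pi(u),\mathbf r)$ is a linear bijection. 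We also record that $u$ is a unit iff $\pi(u)$ is a unit in $R/J$, since $J$ is nilpotent ($J^p=0$). Consequently, for a fixed unit image, every choice of $\mathbf r$ gives a unit. This point is what keeps the affine-space count honest.

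For the equivalence, take an inverter $\mathcal I$ for $f$. Given an RR instance $(\pi(u),A\mathbf r)$, it is literally $f(u)$, so $\mathcal I$ returns some $u'$ with $f(u')=f(u)$. We output the radical coordinates of $u'$. If ``inverting $f$'' means returning the secret $\mathbf r$ itself, this is exactly RR. If it means returning any preimage, we read RR likewise as returning any consistent $\mathbf r$. We state this convention explicitly, because the two notions differ once $\ker A\neq0$. Conversely, an RR solver returns $\mathbf r$; combining it with the public $\pi(u)$ through the splitting reconstructs $u$. Both directions are one call plus linear-time coordinate changes.

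For the leakage claim, we first argue that the Coherent Wedderburn Inversion method only ever touches $R/J$. The Wedderburn transform of $R$ factors through $R\to R/J\cong\F_p[C_\ell]$, with the abelian QFT on $C_\ell$ over a splitting extension. So Theorem~\ref{thm:gui} and Lemma~\ref{lem:compose} act on $\pi(u)$ and output $\pi(u)^{-1}=\pi(u^{-1})$, a function of $\pi(u)$ alone. Any further information about $\mathbf r$ must therefore come from $A\mathbf r$. The set of $u$ consistent with the public data is
\[
\{\,s+\mathbf r' : \pi=\pi(u),\ A\mathbf r'=A\mathbf r\,\}=\mathbf r+\ker A,
\]
an affine space of size $p^{\dim J-\operatorname{rank}A}$. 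It exceeds $1$ because $\operatorname{rank}A\le t<\dim J$. By the unit observation above, all of these points are units.

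The main obstacle is making ``yields no information beyond the probes'' precise rather than rhetorical. I would formalize it as follows: the transcript of the Wedderburn method is a deterministic function of $\pi(u)$. Hence, conditioned on $f(u)$ and with $\mathbf r$ uniform on its fiber, the posterior on $\mathbf r$ is uniform on $\mathbf r+\ker A$. This is information-theoretic only with respect to that method; it is not a hardness statement about RR, which stays an assumption.
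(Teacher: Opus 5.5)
Your proposal is correct and follows the paper's own argument. The output of $f$ is literally an RR instance, the inversion method only ever operates on $R/J$, and the radical vectors consistent with the probes form the affine space $\mathbf r+\ker A$. Your additions tighten that argument in two places: lifting units modulo the nilpotent ideal $J$ shows that every point of this space is a unit, and $\mathrm{rank}\,A\le t<\dim_{\F_p}J$ gives $p^{\dim J-\mathrm{rank}\,A}>1$ directly, where the paper only cites tested instances. (Note also that under your any-preimage convention both problems are solved outright by Gaussian elimination on $A\mathbf x=A\mathbf r$, so the equivalence is sound but carries no hardness.)
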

\begin{proof}
The two components of $f$ are exactly the RR instance, so inverting $f$ and solving RR are
interreducible. The present method operates in $R/J$ and returns $\pi(u)^{-1}$; the radical
coordinates enter $f$ only through $A\mathbf{r}$, so the attacker's view is the linear system
$A\mathbf{x}=A\mathbf{r}$, whose solution set is an affine translate of $\ker A$ of dimension
$\dim J-\mathrm{rank}\,A$. This free dimension is positive on every tested instance
(Supplementary~\textsection S8).
\end{proof}

\begin{remark}[Status of the construction]\label{rem:rr}
Theorem~\ref{thm:rr} is a reduction, not a hardness proof: it shows the construction is one-way
exactly as hard as RR and that the present method does not break it. We do not claim that RR is hard.
It is a new assumption, structurally a noisy-linear-recovery problem masked by the radical, offered
for cryptanalytic study, and the construction should be read as a starting point rather than a
finished scheme.
\end{remark}

\section{Experimental Validation}\label{sec:experiments}

We accompany the paper with an open-source artifact (Python~3.12, Qiskit~2.x) that implements the
validated claims. It provides exact $\F_p$ arithmetic; three inversion routes (abelian character
diagonalization, integral CRT lifting, and the general regular representation); the dihedral
Wedderburn method; the non-abelian Fourier transform; the trace-form semisimplicity witness; resource
estimation; and a simulator-validated, hardware-compatible circuit. A single driver regenerates every
number, figure, and table from fixed random seeds, with the environment pinned in
\texttt{requirements.txt}. This section reports four representative results. The full campaign, namely
the scalability, comparative, statistical, ablation, non-splitting, and non-cyclic studies, is
collected in Supplementary~\textsection S8. The largest directly benchmarked group is $D_{51}$
(order $102$); larger entries in the resource tables are model-based extrapolations, labelled as
such.

Correctness is the first thing to establish. Table~\ref{tab:validation} reports exact validation on
random units drawn from abelian, dihedral, and symmetric platforms: every tested unit is inverted
correctly, and the semisimplicity flag agrees with Theorem~\ref{thm:boundary} in every case.

\begin{table}[t]\centering
\caption{Validation of unit inversion on random units, with the two-sided inverse verified over
$\F_p$.}
\label{tab:validation}
\begin{tabular}{lccccc}
\toprule
Family & $\dim$ & $\mathbb{F}_p$ & units tested & verified & semisimple \\
\midrule
$C_4\times C_2$ (abelian) & 8 & $\mathbb{F}_{17}$ & 189 & 189/189 & yes \\
$D_4$ & 8 & $\mathbb{F}_{5}$ & 92 & 92/92 & yes \\
$D_6$ & 12 & $\mathbb{F}_{13}$ & 191 & 191/191 & yes \\
$S_3$ & 6 & $\mathbb{F}_{7}$ & 171 & 171/171 & yes \\
\bottomrule
\end{tabular}
\end{table}

The asymptotic separation of Theorem~\ref{thm:reduction} is visible already at modest sizes.
Figure~\ref{fig:cost} compares Wedderburn block inversion against dense regular-representation
inversion over $\F_p$ on identical units. The two curves cross near order $14$ and the gap widens
monotonically thereafter, the expected signature of replacing an $|G|^3$ step by an $O(|G|)$ one
together with a fixed transform overhead, rather than a uniform speedup. Relative to the abelian
algorithm of~\cite{dooms2025}, our method is a strict generalization that reduces to theirs when
every block is $1\times1$.

The block dimension is the parameter that ultimately governs tractability, and
Figure~\ref{fig:block}(b) isolates it: holding the field fixed and increasing a synthetic block
dimension $d$, the block-inversion time tracks the predicted $d^\omega$ trend, reaching $37$~ms at
$d=64$. This is the concrete face of the super-polynomial barrier that protects large-block
platforms. Figure~\ref{fig:block}(a) shows that growth in the coefficient-field bit length is only
sub-quadratic, as expected for schoolbook modular arithmetic.

Reliability across many random inputs is summarized in Figure~\ref{fig:stat}. Over $3000$ trials per
family on $D_4$, $D_6$, $D_7$, and $S_3$, every invertible unit encountered was inverted exactly. The
observed success rate is $1.000$, with Wilson $95\%$ lower bounds of at least $0.996$, and the
per-inversion runtimes are tight, reflecting the data-independent control flow of the algorithm.
Finally, Figure~\ref{fig:resource} plots the logical-qubit and dominant-gate model for the
finite-case attack; transpiled QFT counts in a Clifford$+T$ basis support the model at demonstrable
sizes.

\begin{figure}[t]\centering
\begin{minipage}{0.49\linewidth}\centering
\includegraphics[width=\linewidth]{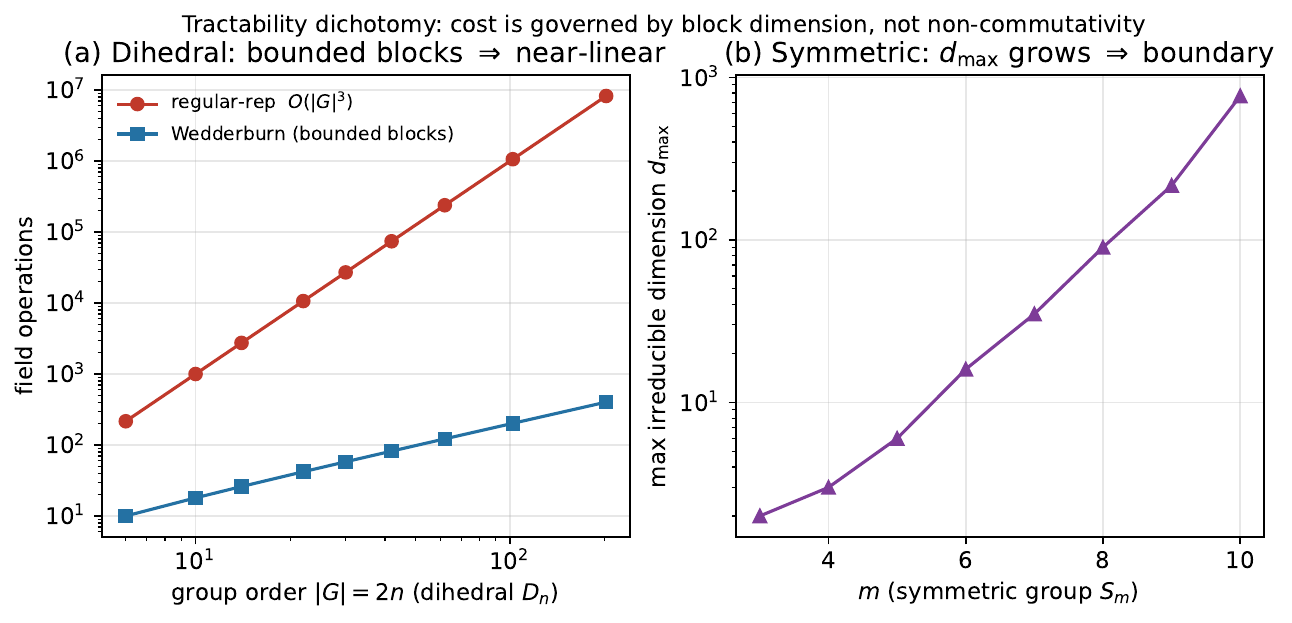}
\caption{Measured runtime of three inverters. The Wedderburn curve crosses below dense $\bmod p$
inversion near order $14$, and the gap widens with $|G|$.}
\label{fig:cost}
\end{minipage}\hfill
\begin{minipage}{0.49\linewidth}\centering
\includegraphics[width=\linewidth]{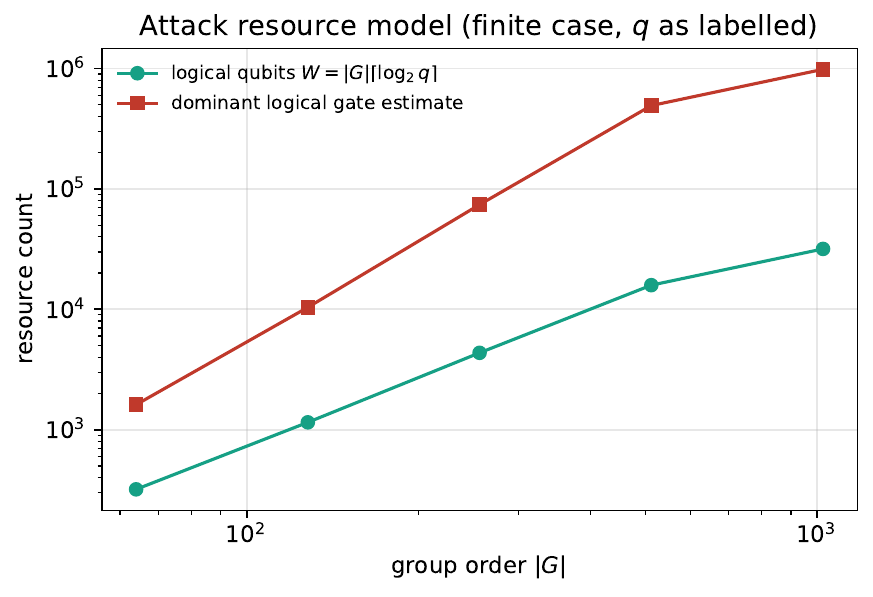}
\caption{Logical resource scaling of the finite-case attack (qubits $W=|G|\lceil\log_2 q\rceil$ and
dominant logical gate estimate).}
\label{fig:resource}
\end{minipage}
\end{figure}

\begin{figure}[t]\centering
\begin{minipage}{0.49\linewidth}\centering
\includegraphics[width=\linewidth]{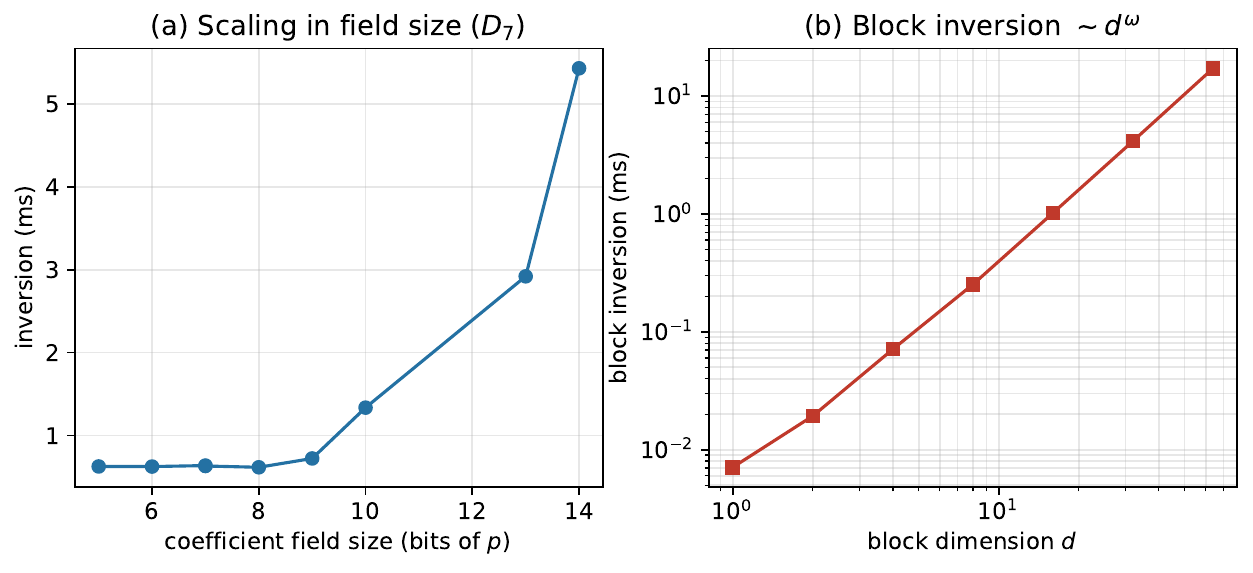}
\caption{(a) Inversion time versus coefficient-field bit length ($D_7$). (b) Single-block inversion
time versus block dimension $d$, tracking $d^\omega$.}
\label{fig:block}
\end{minipage}\hfill
\begin{minipage}{0.49\linewidth}\centering
\includegraphics[width=\linewidth]{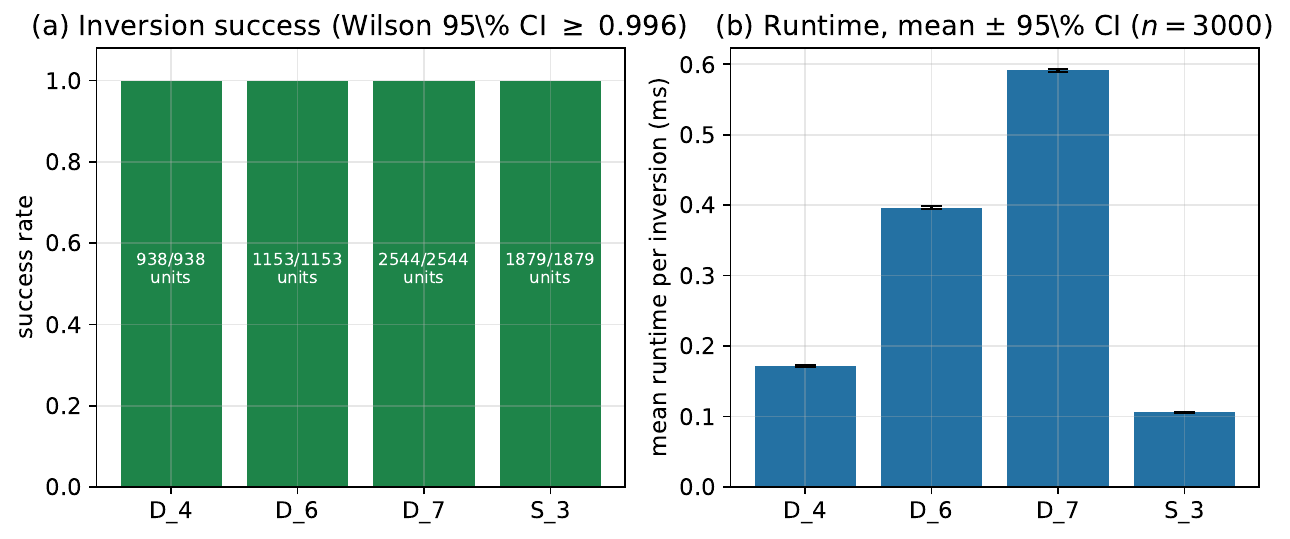}
\caption{(a) Inversion success rate with Wilson $95\%$ confidence; all units invert exactly. (b)
Mean per-inversion runtime with $95\%$ confidence intervals ($n=3000$).}
\label{fig:stat}
\end{minipage}
\end{figure}

\section{Quantum Resource Estimates}\label{sec:resources}

A group-ring element of $\F_q[G]$ occupies a register of width $W=|G|\lceil\log_2 q\rceil$ qubits,
and the attack applies one QFT over $G$ followed by $|G|$ blockwise inversions in $\F_q$. We
summarize the logical model here and defer the details to Supplementary~\textsection S9. Those
details include the full resource tables, the total $T$-count breakdown, and a surface-code estimate.
The surface-code estimate covers magic-state distillation and spacetime volume under a rotated
surface code at physical error $10^{-3}$ with eight $15$-to-$1$ factories. To give a sense of scale,
take two endpoints. The smallest illustrative instance, $D_{31}$, requires code distance $11$, about
$1.2\times10^5$ physical qubits, and sub-second runtime. The largest tabulated instance, $D_{1024}$,
requires about $2\times10^7$ physical qubits, roughly an order of magnitude below contemporary
estimates for breaking RSA-2048~\cite{gidney2021}. These are model-based logical-to-physical
projections, and we treat them as such. A simulator-validated, IBM heavy-hex--compatible
demonstration of the attack's core diagonal step (ideal success $1.0$, degrading gracefully with
transpiled depth) is described in Supplementary~\textsection S10. A ready-to-run hardware-submission
script accompanies the artifact, and on-device execution is left as a deployment step.

\section{Related Work}\label{sec:related}

The work sits at the meeting point of two literatures that had not been connected: the computational
algebra of semisimple decompositions, which knew how to block-diagonalize but did not ask about
quantum cryptographic hardness, and the security analysis of group-ring schemes, which reasoned about
the hidden subgroup problem but did not use the decomposition. Our contribution is the bridge, and the
observation that crossing it removes the assumed hardness rather than supplying it.
Table~\ref{tab:gap} states, for each of the five strands below, the gap it left open and how the
present paper closes it.

\begin{table}[t]\centering
\caption{The five strands of prior work, the gap each left open, and how this paper closes it.}
\label{tab:gap}
\footnotesize\setlength{\tabcolsep}{3pt}
\renewcommand{\arraystretch}{1.25}
\begin{tabular}{p{2.87cm}p{3.61cm}p{5.08cm}}
\toprule
Prior strand & Gap left open & How this paper closes it \\
\midrule
Dooms--Emerencia~\cite{dooms2025} (abelian inversion) & Non-abelian case; cost; affected schemes & Settles bounded-dimension non-abelian case; identifies $d_{\max}$; gives circuit, resources, scheme map \\
Roman'kov--Ushakov~\cite{romankov2023} (algebra-scheme breaks) & Why those breaks worked; which class shares the weakness & Subsumes them under the semisimple-algebra principle (Thm.~\ref{thm:algebra}) \\
Wedderburn algorithms~\cite{friedlronyai,bremner2011} & Whether a quantum trapdoor survives & Supplies the cryptanalytic reading and the coherent quantum realization \\
Non-abelian QFT~\cite{beals1997,moore2006,dihedralgauge2024} & That the transform alone breaks inversion trapdoors & Shows the transform \emph{is} the attack; reuses the circuits as cryptanalysis \\
Dihedral-HSP~\cite{ettinger1999,kuperberg2005,regev2004} & Whether HSP hardness protects group-ring schemes & Proves it does not---the HSP is never invoked (\textsection\ref{sec:hspsep}) \\
\bottomrule
\end{tabular}
\end{table}

The immediate predecessor is the work of Dooms and Emerencia~\cite{dooms2025}, who break unit
inversion for finite commutative coefficient rings and abelian groups using the abelian QFT in a
superposition-and-measure construction. Our result is a strict generalization along the only axis
that mattered for their hardness claim, namely commutativity. Theorem~\ref{thm:gui} recovers their
result as the case $d_{\max}=1$ and extends it to every semisimple platform with an efficient
transform and bounded blocks, while Theorem~\ref{thm:boundary} explains, through the trace form,
exactly where the method must stop. Where their paper leaves the non-abelian question open and offers
neither an implementation nor a resource analysis, we settle the non-abelian case for bounded-block
groups, provide a validated artifact, and quantify resources. Table~\ref{tab:scope} contrasts the
two works along these axes.

\begin{table}[t]\centering
\caption{Capability comparison with the immediate predecessor. ``Explicit circuit'' denotes a stated
reversible register layout with gate and $T$-count accounting; ``resource estimate'' denotes logical
and fault-tolerant qubit and gate figures.}
\label{tab:scope}
\begin{tabular}{lcccc}
\toprule
Work & Abelian & Non-abelian & Explicit circuit & Resource estimate \\
\midrule
Dooms--Emerencia~\cite{dooms2025} & \checkmark & --- & --- & --- \\
This work & \checkmark & \checkmark & \checkmark & \checkmark \\
\bottomrule
\end{tabular}
\end{table}

A second strand is the algebraic and quantum cryptanalysis of algebra-based schemes. Roman'kov and
Ushakov~\cite{romankov2023} broke two associative-algebra signature schemes by a combination of
algebraic and quantum techniques. Theorem~\ref{thm:algebra} subsumes such breaks under a single
principle: any finite-dimensional semisimple algebra with an efficient generalized Fourier transform
and bounded blocks has easy unit inversion. The classical algorithms for computing Wedderburn
decompositions~\cite{bremner2011,friedlronyai} are the constructive backbone we invoke. More broadly,
our methodological stance is that identifying the right algebraic invariant can collapse a
presumed-hard problem. This stance is shared with recent post-quantum cryptanalysis, of which the
SIDH key-recovery attacks~\cite{castryck2022} are a prominent example; in our setting the trace form
plays that role.

The third strand is the dihedral hidden subgroup problem itself. It has been studied through the
measurement analysis of Ettinger and H\o yer~\cite{ettinger1999}, Kuperberg's subexponential
sieve~\cite{kuperberg2005}, and Regev's reduction tying dihedral-HSP hardness to the shortest-vector
problem~\cite{regev2004}; Horan and Kahrobaei~\cite{horan2018} survey its role in group-based
post-quantum cryptography. Our contact with this literature is deliberately negative: the hardness it
establishes is irrelevant to unit inversion, because inversion never forms or sieves a coset state.
The dihedral group is the clean example separating ``efficient QFT'' from ``efficient HSP.''

The fourth strand comprises classical attacks on dihedral group-ring schemes. Tinani~\cite{tinani2022}
breaks a scheme over twisted dihedral group algebras by reducing a decomposition problem to
circulant equations, and the cryptanalysis of dihedral Group-Ring NTRU~\cite{grntru2025} attacks the
underlying shortest-vector problem with lattice reduction. Neither concerns unit inversion and
neither is quantum; our method is orthogonal to both, so that the dihedral group-ring design space is
now constrained from both the classical-lattice and the quantum-algebraic sides.

The fifth strand is the construction of efficient quantum Fourier transforms over non-abelian
groups. This line was initiated by Beals~\cite{beals1997} and generalized by Moore, Rockmore and
Russell~\cite{moore2006}. The specific dihedral circuits we use appear in the quantum-simulation
literature for lattice gauge theories~\cite{dihedralgauge2024}, where the dihedral transform, group
multiplication, and inversion are explicit primitives. The cryptanalytic and simulation communities
thus rely on the same circuits, a coincidence that makes our circuits directly reusable as simulation
primitives. Table~\ref{tab:related} summarizes the comparison across all of these.

\begin{table}[t]\centering
\caption{Comparison with prior work. The columns ``HSP?'' and ``QFT?'' indicate reliance on solving
a hidden subgroup problem and on a quantum Fourier transform, respectively.}
\label{tab:related}
\footnotesize\setlength{\tabcolsep}{3pt}
\begin{tabular}{p{2.13cm}p{1.89cm}p{1.64cm}ccp{1.80cm}p{1.97cm}}\toprule
Paper & Target structure & Attack type & HSP? & QFT? & Quantum complexity & Main limitation \\\midrule

Dooms--Emerencia \cite{dooms2025} & abelian group ring & unit inversion & no & yes (abelian) & $\widetilde{O}(\log|G|)$ & commutative only \\
This work & $\mathbb{F}[G]$, semisimple alg. & unit inversion & no & yes (non-abelian) & $\mathrm{poly}$ if $d_{\max}{=}\mathrm{poly}$ & splitting field; logical cost \\
Roman'kov--Ushakov \cite{romankov2023} & associative algebra sig. & algebraic + quantum & no & partial & scheme-specific & specific schemes \\
Tinani \cite{tinani2022} & twisted dihedral alg. & classical decomposition & no & no & classical & classical only \\
GR-NTRU dihedral \cite{grntru2025} & dihedral GR-NTRU & lattice / SVP & no & no & classical (lattice) & lattice-reduction based \\
Kuperberg \cite{kuperberg2005} & dihedral group & HSP (sieve) & yes & yes & $2^{O(\sqrt{\log})}$ & solves HSP, not inversion \\
Regev \cite{regev2004} & dihedral group & HSP $\leftrightarrow$ SVP & yes & yes & subexponential & reduction, not attack \\
\bottomrule
\end{tabular}
\end{table}

\section{Limitations}\label{sec:threats}

Several assumptions bound the strength of our conclusions, and we state them plainly. The quantum
efficiency claim of Theorems~\ref{thm:gui}(iii) and~\ref{thm:reduction}(ii) presupposes an efficient
generalized QFT for the platform. This holds for the abelian, dihedral, and supersolvable
groups~\cite{beals1997,moore2006,dihedralgauge2024} that cover the schemes we target, but it is not
known in general, and for an arbitrary group the cryptanalytic conclusion does not follow from our
results alone. A related convenience is the splitting-field hypothesis under which
Theorem~\ref{thm:gui} is stated. Over non-splitting finite fields the blocks become matrix algebras
over field extensions (division rings over $\F_p$ being fields by Wedderburn's little theorem). The
base-field regular-representation route still inverts every unit exactly (validated in
Supplementary~\textsection S8), but the clean block presentation requires the extension. Efficiency
also requires bounded blocks, $d_{\max}=\poly(\log|G|)$; this fails for $S_m$, and that regime is open
by Conjecture~\ref{conj:hard}.

On the engineering side, all fault-tolerant figures in Section~\ref{sec:resources} and
Supplementary~\textsection S9 are logical-to-physical projections under a standard surface-code
model rather than hardware measurements, and no on-device execution was performed. On the
complexity-theoretic side, we prove only the easy direction without conditions; the hard direction
rests on oracle-model, output-size, and conditional results, while the unconditional general lower
bound (Conjecture~\ref{conj:hard}) remains open. Finally, because the targeted schemes fix no
standardized parameter sets and, for~\cite{hurley2011,mittalkumar2021}, provide no formal reduction,
our end-to-end recovery is against the intended hardness assumption rather than against a published
security theorem. None of these caveats affects the constructive results, which are validated
exactly.

\section{Conclusion}\label{sec:conclusion}

The lesson of this paper is not that dihedral group rings are weak. It is that non-commutativity was
never the right place to look for hardness. Inverting a unit of a group ring is a
Fourier-diagonalizable task, not a hidden-subgroup task, and the quantity that decides whether it
resists a quantum adversary is the representation-theoretic block dimension of the algebra, together
with the efficiency of its transform---not whether the group is abelian. The hidden subgroup problem,
invoked to justify the move to non-abelian platforms, is never engaged by the attack, so its
hardness, however firm, protects nothing here.

We prove this in one direction---an efficient inversion algorithm, the exact conditions under which
it runs, and an exact witness of the boundary where it stops---and we support the converse with
output-size, query-complexity, and \textsc{\#P}-hardness evidence, leaving a single general lower
bound as an explicit conjecture. We make the attack concrete with a validated reversible circuit and
fault-tolerant resource estimates, and we are careful not to overstate either the reach of the attack
or the security of the regimes it does not reach. The practical recommendation follows directly: the
security analysis of any future group-ring proposal should begin with its block structure and the
efficiency of its transform, and should not rest on the hidden subgroup problem.

Several questions follow naturally.

\begin{itemize}[leftmargin=1.4em,itemsep=2pt]
\item Do structured subclasses of large-block groups admit efficient transforms and bounded
effective blocks? An answer would sharpen Conjecture~\ref{conj:hard}.
\item Can the finite-field result be lifted to $\Z[G]$ for non-abelian $G$ through coefficient-growth
bounds?
\item What is the unit-group structure of modular group algebras from a cryptographic standpoint,
and is the Radical Recovery assumption hard?
\item Can the engineering be completed with a fully fault-tolerant non-abelian phase step and an
end-to-end recovery against specific published parameter sets?
\end{itemize}


\section*{Declarations}

\paragraph{Funding} This research did not receive any specific grant from funding agencies in the
public, commercial, or not-for-profit sectors.

\paragraph{Competing interests} The author declares that he has no known competing financial
interests or personal relationships that could have appeared to influence the work reported in this
paper.

\paragraph{Data availability} An environment-pinned artifact reproducing every theorem check, figure,
and table accompanies this submission as supplementary material. No external datasets were generated
or analysed.

\paragraph{Code availability} The software artifact that regenerates every constructive claim,
figure, and table is provided as supplementary material and is environment-pinned for
reproducibility.

\paragraph{Author contributions} B.~Gupta carried out the conceptualization, methodology, software,
formal analysis, validation, investigation, and writing of this paper in its entirety.

\paragraph{Use of generative AI} During the preparation of this work the author used a generative AI
assistant for language editing, restructuring, and formatting. After using this tool the author
reviewed and edited the content as required and takes full responsibility for the content of the
publication.

\bibliographystyle{plainnat}
\bibliography{refs}

@article{dooms2025,
  author  = {Dooms, A. and Emerencia, C.},
  title   = {Efficient quantum algorithms to break group ring cryptosystems},
  journal = {Journal of Information Security and Applications},
  volume  = {88},
  pages   = {103923},
  year    = {2025}
}

@article{hurley2011,
  author  = {Hurley, B. and Hurley, T.},
  title   = {Group ring cryptography},
  journal = {International Journal of Pure and Applied Mathematics},
  volume  = {69},
  pages   = {67--86},
  year    = {2011}
}

@article{mittalkumar2021,
  author  = {Mittal, G. and Kumar, S. and Narain, S. and Kumar, S.},
  title   = {Group rings based public key cryptosystems},
  journal = {Journal of Discrete Mathematical Sciences and Cryptography},
  year    = {2021}
}

@article{idbased2022,
  author  = {Anonymous},
  title   = {A quantum secure ID-based cryptographic encryption based on group rings},
  journal = {S\=adhan\=a},
  volume  = {47},
  pages   = {Article 35},
  year    = {2022}
}

@article{romankov2023,
  author  = {Roman'kov, V. and Ushakov, A.},
  title   = {Algebraic and quantum attacks on two digital signature schemes},
  journal = {Journal of Mathematical Cryptology},
  year    = {2023}
}

@inproceedings{beals1997,
  author    = {Beals, R.},
  title     = {Quantum computation of Fourier transforms over symmetric groups},
  booktitle = {Proceedings of STOC},
  pages     = {48--53},
  year      = {1997}
}

@article{moore2006,
  author  = {Moore, C. and Rockmore, D. and Russell, A.},
  title   = {Generic quantum Fourier transforms},
  journal = {ACM Transactions on Algorithms},
  volume  = {2},
  pages   = {707--723},
  year    = {2006}
}

@article{kuperberg2005,
  author  = {Kuperberg, G.},
  title   = {A subexponential-time quantum algorithm for the dihedral hidden subgroup problem},
  journal = {SIAM Journal on Computing},
  volume  = {35},
  pages   = {170--188},
  year    = {2005}
}

@article{regev2004,
  author  = {Regev, O.},
  title   = {Quantum computation and lattice problems},
  journal = {SIAM Journal on Computing},
  volume  = {33},
  pages   = {738--760},
  year    = {2004}
}

@article{dihedralgauge2024,
  author  = {Gustafson, E. J. and others},
  title   = {Primitive quantum gates for dihedral gauge theories; Highly-efficient quantum Fourier transformations for certain non-abelian groups},
  journal = {Physical Review D},
  volume  = {110},
  pages   = {074501},
  year    = {2024}
}

@article{bremner2011,
  author  = {Bremner, M. R.},
  title   = {How to compute the Wedderburn decomposition of a finite-dimensional associative algebra},
  journal = {Groups Complexity Cryptology},
  volume  = {3},
  pages   = {47--66},
  year    = {2011}
}

@inproceedings{friedlronyai,
  author    = {Friedl, K. and R\'onyai, L.},
  title     = {Polynomial time solutions of some problems in computational algebra},
  booktitle = {Proceedings of STOC},
  pages     = {153--162},
  year      = {1985}
}

@misc{tinani2022,
  author = {Tinani, S.},
  title  = {Cryptanalysis of a system based on twisted dihedral group algebras},
  note   = {Preprint},
  year   = {2022}
}

@article{grntru2025,
  author  = {Kumar, V. and others},
  title   = {Cryptanalysis of Group Ring NTRU: the case of the dihedral group},
  journal = {Security and Privacy},
  year    = {2025}
}

@inproceedings{castryck2022,
  author    = {Castryck, W. and Decru, T.},
  title     = {An efficient key recovery attack on SIDH},
  booktitle = {EUROCRYPT},
  year      = {2023}
}

@inproceedings{horan2018,
  author    = {Horan, K. and Kahrobaei, D.},
  title     = {The hidden subgroup problem and post-quantum group-based cryptography},
  booktitle = {ICMS, LNCS 10931},
  year      = {2018}
}

@inproceedings{roetteler2017,
  author    = {Roetteler, M. and Naehrig, M. and Svore, K. M. and Lauter, K.},
  title     = {Quantum resource estimates for computing elliptic curve discrete logarithms},
  booktitle = {ASIACRYPT, LNCS 10625},
  pages     = {241--270},
  year      = {2017}
}

@article{gidney2021,
  author  = {Gidney, C. and Eker\aa, M.},
  title   = {How to factor 2048-bit RSA integers in 8 hours using 20 million noisy qubits},
  journal = {Quantum},
  volume  = {5},
  pages   = {433},
  year    = {2021}
}

@inproceedings{williams2024,
  author    = {Williams, V. V. and Xu, Y. and Xu, Z. and Zhou, R.},
  title     = {New bounds for matrix multiplication: from alpha to omega},
  booktitle = {SODA},
  pages     = {3792--3835},
  year      = {2024}
}

@inproceedings{ettinger1999,
  author    = {Ettinger, M. and H\o yer, P.},
  title     = {On quantum algorithms for noncommutative hidden subgroups},
  booktitle = {STACS, LNCS 1563},
  pages     = {478--487},
  year      = {1999}
}

@article{burgisser2000,
  author  = {B\"urgisser, P.},
  title   = {The computational complexity of immanants},
  journal = {SIAM Journal on Computing},
  volume  = {30},
  number  = {3},
  pages   = {1023--1040},
  year    = {2000}
}

@article{delacruz2021,
  author       = {de la Cruz, Javier and Villanueva-Polanco, Ricardo},
  title        = {Public key cryptography based on twisted dihedral group algebras},
  journal      = {arXiv preprint arXiv:2112.07798},
  year         = {2021},
  note         = {Adv. Math. Commun.}
}

\end{document}